\newcommand{\supp}{\mathrm{supp}}
\newcommand{\X}{\boldsymbol{X}}
\newcommand{\I}{\boldsymbol{\mathcal I}}
\newcommand{\bL}{\mathbf {\Lambda}}
\newcommand{\N}{{\mathbb N}}
\newcommand{\Z}{{\mathbb Z}}
\newcommand{\F}{{\mathbb F}}
\renewcommand{\L}{{\mathbb L}}
\newcommand{\D}{\mathcal{D}}
\newcommand{\R}{\mathcal{R}}
\renewcommand{\S}{\mathcal{S}}
\newcommand{\G}{\mathcal{G}}
\newcommand{\tq}{\;\mid\;}
\newcommand{\bs}[1]{\boldsymbol{#1}}
\newtheorem{theorem}{Theorem}[section]
\newtheorem{lemma}[theorem]{Lemma}
\newtheorem{proposition}[theorem]{Proposition}
\newtheorem{definition}[theorem]{Definition}
\newtheorem{definitions}[theorem]{Definitions}
\newtheorem{remark}[theorem]{Remark}
\newtheorem{notation}[theorem]{Notation}
\newtheorem{example}[theorem]{Example}
\newtheorem{concepto}[theorem]{}
\begin{document}

\title{Inference of unknown syndrome values in the implementation of the Berlekamp-Massey-Sakata algorithm.}

\author{Jos\'e Joaqu\'{i}n Bernal and Juan Jacobo Sim\'on,~\IEEEmembership{Member,~IEEE,}
\thanks{Manuscript received }
\thanks{Jos\'e Joaqu\'{i}n Bernal and Juan Jacobo Sim\'on are with the Departamento de Matem\'aticas, Universidad de Murcia, 30100 Murcia, Spain (e-mail: josejoaquin.bernal@um.es; jsimon@um.es)}
\thanks{Supported by Spanish government: Grant PID2020-113206GB-
I00 funded by MICIU/AEI/ 10.13039/501100011033 and 
Fundaci\'on S\'eneca of Murcia, project 22004/PI/22.}
}



\maketitle

\begin{abstract}
We study the problem of finding those missing syndrome values that are needed to implment the Berlekamp-Massey-Sakata algorithm as the Feng-Rao Majority Voting for algebraic geometric codes. We apply our results to solve syndrome correction in abelian codes.
\end{abstract}

\begin{IEEEkeywords}
Abelian codes, Berlekamp-Massey-Sakata
algorithm, inference of unknown values.
\end{IEEEkeywords}
\section{Introduction.}

The Berlekamp-Massey-Sakata algorithm (BMSa, for short) is an iterative procedure with respect to a well order in $\N\times \N$ to construct Groebner basis for the ideal of the so-called linear recurring relations of a doubly periodic array (see below for notation and definitions).

The origin of the BMSa may be found in the algorithm propossed by E. Berlekamp in 1967 \cite{berlekamp} for decoding BCH codes. In 1969, J. Massey \cite{massey} simplified the procedure and gave the termination conditions. The new version was called the Berlekamp-Massey algorithm. In 1988, Sakata extends it to two variables \cite{Sakata}, which is known as the BMSa. Finally, in 2021, in \cite{BS2}, the termination problem was completely solved, finding a minimal set of indexes that guarantees that the set of polynomials obtained at most at the last iteration in the BMSa over such a set is exactly a Groebner basis.

One of the most important applications of the BMSa may be found in locator decoding for abelian codes. Let us comment briefly how does it work (see Section~\ref{locator ideal} below for details). For any received word (formed by the sum of a codeword and an error polynomial), one may form (partially) a syndrome table (Definition~\ref{tabla sindromes}) of the error polynomial, which is a double periodic array. The idea is to find an ideal  whose defining set is exactly the set of error positions (the support of the error polynomial). To do this, from the syndrome table, we apply the BMSa to find a reduced Groebner basis for the  locator ideal and then, we may find its defining set and so, the error positions.

As we have already commented, from a received word we cannot construct all the syndrome table for the error polynomial; in fact, we only may know those values whose index corresponds to the defining set of the abelian code. So, once one has identified the minimal set of indexes needed to implement the BMSa some (few) syndrome values may be missing. In the case of algebraic geometric codes (see \cite{Cox et al Using}) there are some inference procedures (specially the Feng-Rao Majority Voting (FRMV) \cite[Theorem 10.3.7]{Cox et al Using} for one-point AG codes).

In this paper, we study the problem of finding those missing syndrome values that are needed to implment the BMSa. As the reader will see, our procedure, as the FRMV, requires previous syndrome information.

\section{Basic notions}

We shall follow all notation in \cite{BS2}. Throughout this paper $q,\,r_1,\,r_2$ will be positive integers such that $q$ is a power of a prime number, $\gcd(q,r_1r_2)=1$ and $\F=\F_q$ is a finite field of $q$-elements. As it is well-known, in this case the polynomial quotient ring $\F(r_1,r_2)=\F[X_1,X_2]/\langle X_1^{r_1}-1,X_2^{r_2}-1 \rangle$ is semi simple. We set $\I=\Z_{r_1}\times \Z_{r_2}$ where we only consider canonical representatives. 

We consider the partial ordering in $\Z\times \Z$ given by $(n_1,n_2) \preceq (m_1,m_2) \Longleftrightarrow n_1\leq m_1$ and $n_2\leq m_2.$ For $s\in \Z\times\Z$, we denote $\Sigma_s=\{(m,n)\in \Z\times \Z\tq s\preceq (m,n)\}$; in particular, setting $0=(0,0)$, we have $\Sigma_0=\Z\times\Z$. On the other hand, we will use a (total) monomial ordering \cite[Definition 2.2.1]{Cox}, denoted by ``$\leq_T$'', as in \cite[Section 2]{Sakata}. This ordering will be either the lexicographic order (with $X_1>X_2$) \cite[Definition 2.2.3]{Cox} or the (reverse) graded order (with $X_2>X_1$) \cite[Definition 2.2.6]{Cox}. The meaning of ``$\leq_T$'' will be specified as required. 

We write polynomials as it is usual:  $f \in  \F[X_1,X_2]$ is $f=f(\bs X)=\sum a_m \X^m$, where $m=(m_1,m_2)\in \Sigma_0$ and $\X^m=X_1^{m_1}\cdot X_2^{m_2}$. We write the canonical representatives $f \in  \F(r_1,r_2)$ as polynomials $f=\sum a_m \X^m$, where $m=(m_1,m_2)\in \I$. Given $f \in \F[\X]$, we denote by $\overline{f}$ its image under the canonical projection onto $\F(r_1,r_2)$, when necessary. The weight of any  $f\in \F(r_1,r_2)$ will be cardinality of the support of the polynomial; that is, $\omega(f)=|\supp(f)|$. 

For each $i\in \{ 1,2\}$, we denote by $R_{r_i}$ (resp. $\R_{r_i}$) the set of all $r_i$-th roots of unity (resp. primitive $r_i$-th roots) and we define $R=R_{r_1}\times R_{r_2}$ ($\R=\R_{r_1}\times \R_{r_2}$). We denote by $\L|\F$ a field extension that contains $R_{r_i}$, for $i=1,2$.

For any $f=f(\X) \in \F[\X]$ and $\boldsymbol{\alpha}=(\alpha_1,\alpha_2)\in R$, we write $f(\boldsymbol{\alpha})=f(\alpha_1,\alpha_2)$ and for any $m=(m_1,m_2)\in \I$, we write  $\boldsymbol{\alpha}^{m} = (\alpha_1^{m_1},\alpha_2^{m_2})$.

\begin{definition}
Let $U=(u_l)_{l\in \Sigma_0}$ be an array with entries in $\L$.We call $U$ a double periodic array, of period $r_1\times r_2$ if it satisfies the following condition:
  \begin{quotation}
  For $l=(l_1,l_2)$ and $m=(m_1,m_2)$ one has that $l_i\equiv m_i\mod r_i$ for $i=1,2$ implies that $u_l=u_m$.
  \end{quotation}
\end{definition}

\begin{definition}\label{tabla sindromes}
   Let $e\in\F(r_1,r_2)$, $\tau\in \I$ and $U=(u_n)_{n\in \Sigma_0}$ an array such that $u_n=e(\boldsymbol{\alpha}^{\tau+n})$. We shall call \textbf{syndrome values} to those $e(\boldsymbol{\alpha}^{\tau+n})$ and  $U$ will be called the \textbf{syndrome table afforded by $\tau$ and $e$}.
 \end{definition}

Note that any syndrome table as defined above is a doubly periodic array of period $r_1\times r_2$. Clearly, they entries are totally determined by its indexes over the range $0\leq n_1\leq r_1-1$ and $0\leq n_2\leq r_2-1$. In practice, working with double periodic arrays is equivalent to do it with a finite table.

As we have commented, the BMSa is an iterative procedure over the array $U$ with respect to a total ordering over $\I$; so, as in \cite[p. 269]{BS2} and \cite[p. 322]{Sakata} we specify the notion of successor. In the case of graded order we have the usual one
$$l+1=\begin{cases}
	    (l_1-1,l_2+1) & \text{if } l_1>0\\
            (l_2+1,0) & \text{if } l_1=0
      \end{cases}.$$
In the case of lexicographic order we introduce a modification
$$l+1=\begin{cases}
	    (l_1,l_2+1) & \text{if } l_2<r_2-1\\
            (l_1+1,0) & \text{if } l_2=r_2-1
      \end{cases}.$$

 As in \cite[p. 323]{Sakata}, for any $f\in \F[\X]$, we denote by $LP(f)$ the leading power product exponent, or \textbf{multidegree}, of $f$ with respect to $\leq_T$.

\begin{definition}\label{recurrencia lineal}
	Let $U$ be a doubly periodic array, $f\in \L[\mathbf X]$, with multidegree $LP(f)=s$ and $n\in \Sigma_0$. We write $f=\sum_{m\in \supp(f)}f_m\mathbf{X}^m$ and define 
	\[f[U]_n=\begin{cases}
		\displaystyle{ \sum_{m\in\supp(f)}f_m u_{m+n-s}}& \text{if }  n\in\Sigma_s\\
		0 & \text{otherwise}
	\end{cases}.\]
	
	The equality $f[U]_n=0$ will be called a linear recurring relation.\\
	
	We will say that there is a \textbf{linear recurring relation by convention} when we want to emphasize that $n\not\in\Sigma_s$; that is, there are no computation.
\end{definition}

Now, another definition related with iterations

\begin{definition}
 For $l\in \Sigma_0$, we define the finite subarray $u^l=\{u_m\tq m <_T l\text{ and } m\prec (r_1,r_2)\}$. 

Note that it is a finite set.
\end{definition}

If $u_k\in u^l$ then we also write $f[u^l]_k=f[U]_k$.

  \begin{definition}
Let $u^l\subseteq U$ be a finite subarray and  $f\in \L[\mathbf X]$ with $LP(f)=s$.
 \begin{enumerate}
 \item We say that $f$ generates  $u^l$ if $f[u^l]_k=f[U]_k=0$ on each pair $k$ such that $s\preceq k$ and $k<_T l$, and we write $f[u^l]=0$. If $\{k\in \Sigma_0\tq  s\preceq k\;\text{ and }\;k<_T l\}=\emptyset$ we will also write $f[u^l]=0$.
 
  \item We denote the \textbf{set} of generating polynomials for  $u^l$ as 
  \[\bL(u^l)=\{f\in \L[\mathbf{X}]\tq f[u^l]=0\}.\]  
  
  \item We say that $f$ generates $U$ if $f[u^l]=0$ on each pair $l\in \Sigma_0$ and we write $f[U]=0$.   
  
\item We denote the \textbf{ideal} of generating polynomials for $U$ as 
  $$\bL(U)=\left\{f\in\L[\mathbf{X}] \tq f[U]=0\right\}.$$
\end{enumerate}
  \end{definition}

\begin{notation}\label{los Delta_s}
For $s\in\Sigma_0$ we denote 
$$\Delta_s=\left\{n\in\Sigma_0\tq n\preceq s\right\}.$$

In fact, we always assume that $s\preceq (r_1-1,r_2-1)$.
\end{notation}

\begin{definitions}\label{descripcion de los Delta} Let $U$ be a doubly periodic array, $u^l\subseteq U$ and consider a set of polynomials $F_l=\left\{f^{(1)},\dots,f^{(d)}\right\}\subset\bL(u^l)$. Set $s^{(i)}=LP(f^{(i)})$, where $s^{(i)}=(s^{(i)}_1,s^{(i)}_2)$.
 \begin{enumerate}
 \item The footprint (or the connection footprint \cite[p. 1621]{Blah}) of $F$ is the set 
 \begin{eqnarray*}
\Delta(F)&=&\{n\in \Sigma_0\tq n\preceq(s^{(i)}_1-1,s^{(i+1)}_2-1)\\
&&\text{for some } 0\leq i\leq d-1\}\\
&=&\bigcup_{i=1}^{d-1}\Delta_{\left(s^{(i)}_1-1,s^{(i+1)}_2-1\right)}.
 \end{eqnarray*}
  \item We say that $F_l$ is a minimal set of polynomials for $u^l$ if
  \begin{enumerate}
   \item The sequence $s^{(1)},\dots,s^{(d)}$  (called defining points) satisfies  
   \begin{eqnarray*}
    s^{(1)}_1>\ldots>s^{(d)}_1=0 \quad\text{and}\\
    \quad 0=s^{(1)}_2<\ldots<s^{(d)}_2.
   \end{eqnarray*}
    \item If $g\in\L[\X]$ is such that $LP(g)\in \Delta(F)$ then $g\not\in\bL(u^l)$ (that is, $g[u^l]\neq 0$).
  \end{enumerate}
  \item From \cite{Blah} or \cite{Sakata} we have that, for any $l \in \Sigma_0$ and minimal sets $F,F'\subset \bL(u^l)$ we have that $\Delta(F)=\Delta(F')$, so we may write $\Delta(u^l)$ and we also call them footprints.
 \end{enumerate}
\end{definitions}

\begin{remark}\label{Conjunto de los G}
In \cite[Section 4]{Sakata}, it is proved that every set $\bL(u^l)$ has at least one minimal set of polynomials, $F_l$. Moreover, if $l,l'\in \I$ are such that $l<_T l'$ then the footprints verify $\Delta(u^l)\subseteq \Delta(u^{l'})$. 

The "corner points" $(s^{(i)}_1-1,s^{(i+1)}_2-1)$ with $i=1,\dots,d-1$ (external corner points in \cite[p. 1623]{Blah}) that determine the footprints are related with another set of polynomials that we call $G_l=\{g^{(i)}\tq i=1,\dots,d-1\}$ and pairs $k_1,\dots,k_{d-1}$ in $\I$ such that, for $i=1,\dots,d-1$, $k_i<_T l$, $g^{(i)}[u^{k_i}]_{k_i}\neq 0$ and $k_i-LP(g^{(i)})=(s^{(i)}_1-1,s^{(i+1)}_2-1)$ (see \cite[p. 327]{Sakata}). These sets are used in each iteration to update minimal sets of polynomials and, in turn, they are updated as well.
	
\end{remark}

\subsection{A brief description of the BMSa and a termination condition.}\label{descripcion del aBMS}

Following \cite[p. 331]{Sakata} (see also \cite[Paragraph III.B]{BS2}), we start with a doubly periodic array of period $r_1\times r_2$, say $U=(u_n)_{n\in \Sigma_0}$, where we have already defined a partial order $\preceq$ and a monomial ordering $\leq_T$, together with a notion of successor for $\I$.

\begin{itemize}
    \item We initialize $F_{(0,0)}=\{1\}$ and $G_{(0,0)}=\Delta(u^{(0,0)})=\emptyset$.
  \item For $l=(0,0)$, we have already the initializing objects.  
  \item Now, for $l\in \I$ with given $F_{l}$, $G_l$ and $\Delta(u^{l})$ (including $l=(0,0)$), we update them as follows:    
  \item For each $f\in F_l$ we compute $f[u^{l+1}]_l$. Then  
  \begin{enumerate}
   \item If $f[u^{l+1}]_l=0$ for all $f\in F_l$ then $F_l=F_{l+1}$,  $G_l=G_{l+1}$ and $\Delta(u^{l})=\Delta(u^{l+1})$; so that there is no strictly updating.  
   \item Otherwise, $F_l\neq F_{l+1}$.
   \begin{itemize}
    \item In this case, each $f\in F_l$ such that $f[u^{l+1}]_l\neq 0$ will be replaced following what is called \textbf{the Berlekamp procedure} \cite[Lemma 6]{Sakata} or \cite[Proposition 15]{BS2} (so that $F_l\neq F_{l+1}$).
   \end{itemize}
  \end{enumerate}  

The BMSa contemplates two alternatives in order to apply the Berlekamp procedure; namely Procedure 1 and Procedure 2 (see \cite[Paragraph III.B, pp. 271-272]{BS2} or \cite[Theorem 1 and Theorem 2]{Sakata}.   With respect to the footprint, we have
    \begin{enumerate}
  \item If $l-LP(f)\in\Delta(u^l)$, it applies Procedure 1 that gets  $\Delta(u^{l})=\Delta(u^{l+1})$ and  $G_l=G_{l+1}$ .
  
  \item If $l-LP(f)\not\in\Delta(u^l)$, it applies Procedure 2 that gets $\Delta(u^{l})\varsubsetneq \Delta(u^{l+1})$; in fact $l-LP(f)\in\Delta(u^{l+1})$. Finally, some of the replaced polynomials of $F_l$ will be used to update $G_l$ to $G_{l+1}$.
 \end{enumerate}
 \end{itemize}

\subsubsection{Sufficient conditions to find Groebner basis for $\bL(U)$.}

 In \cite{Sakata} Sakata and other authors of modern versions (see \cite{Cox et al Using,Sakata 3}) present some termination bounds. As we have commented above, in \cite{BS2} we find a minimal set of indexes (and hence the minimal number of steps) that guarantees a successful implementation of the BMSa. Let us summarize those results, that we will needed for our inference procedures.

\begin{definition}
	We define the set
	\begin{eqnarray*}
		\S(t)&=&\left\{(0,j)\in \I\tq j=0,\dots, 2t-1\right\}\\&& \cup \left\{(i,0)\in \I\tq i=0,\dots, 2t-1\right\}\\&&
		\cup \left\{(i,j)\in \I\tq i,j\neq 0\;\wedge\; i+j\leq t\right\}.
	\end{eqnarray*}
	Also,
	\begin{enumerate}
		\item We say that $\S(t)$ satisfies the l-condition (lexicographic) if $u_{(0,j)}\neq 0$, for some $j<t$.
		\item We say thay $\S(t)$ satisfies the g-condition (graded) if $u_{(i,j)}\neq 0$, for some $(i,j)$, with $i+j=1$.
	\end{enumerate}
\end{definition}

\begin{remark}\label{fuera de rango es rec lin}In \cite{BS2} the following facts are proved, for each $l\in \S(t)$ and $f\in F_l$.
 \begin{itemize}
        \item Every minimal set of polynomials may be taken to its normal form (as in \cite[p. 83]{Cox}); that is, $\supp(f)\setminus LP(f)\in \Delta(u^l)$.
        \item In any computation of $f[U]_{l}$, one has that if it happens that $l-s+m\not\in \S(t)$ then there will be a linear recurrence relation.
\end{itemize}
\end{remark}

From Theorem~23 and Theorem~28 in \cite{BS2} we have the following summary result.

\begin{theorem}\label{condiciones suficientes}
 Let $U$ be a syndrome table afforded by $\tau$ and $e$, with $\omega(e)\leq t\leq \lfloor\frac{r_i}{2}\rfloor$, for $i=1,2$. Suppose that, following the BMSa we have constructed, for $l=(l_1,l_2)\in \I$ and $u^l$, the sets $\Delta(u^l)$ and $F_l$. We have that, if $l\not\in \S(t)$ then $F_l=F_{l+1}$.
		
 Hence in order to obtain a Groebner basis for $\bL(U)$ it is enough to implement the BMSa solely on the set of indexes $\S(t)$.
\end{theorem}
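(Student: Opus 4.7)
The plan is to reduce the theorem to the key technical fact recorded in Remark~\ref{fuera de rango es rec lin}: whenever a summation index $l-s+m$ leaves $\S(t)$, the corresponding contribution to $f[U]_l$ participates in a ``linear recurrence by convention.'' Once every such shift is shown to exit $\S(t)$, the BMSa's update rule forces $F_l = F_{l+1}$, since no Berlekamp replacement can be triggered.

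Fix $l \notin \S(t)$ and $f \in F_l$ with $LP(f) = s$. By Remark~\ref{fuera de rango es rec lin} we may take $f$ in normal form, so that $\supp(f) \setminus \{s\} \subseteq \Delta(u^l)$, and expand
\[
f[U]_l \;=\; \sum_{m \in \supp(f)} f_m\, u_{\,l - s + m}.
\]
The heart of the argument is to show that $l - s + m \notin \S(t)$ for every $m \in \supp(f)$. For $m = s$ this is our hypothesis. For $m \prec s$, the weight bound $\omega(e) \leq t$ tightly constrains the footprint: the defining points $s^{(i)}$ of a minimal set for $u^l$ satisfy $s^{(i)}_j \leq t$ in each coordinate, because $\bL(U)$ contains the locator polynomials of the error positions, whose partial degrees are at most $t$. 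Consequently $s - m$ is a vector with components bounded by $t$. Combined with $t \leq \lfloor r_i/2 \rfloor$ and the explicit T-shape of $\S(t)$ (two arms of length $2t$ along the axes plus the triangle $\{i+j \leq t\}$), a direct case analysis on how $l$ leaves $\S(t)$ shows that a perturbation of $l$ by at most $t$ in each coordinate cannot re-enter $\S(t)$. The cases need to be handled separately for the lexicographic and graded orderings, since the successor operation and the relevant l-condition or g-condition differ.

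With every $f[U]_l$ forced to vanish, no update fires at step $l$, giving $F_l = F_{l+1}$. The second assertion of the theorem follows by a telescoping argument: the indices at which the BMSa can modify $F$ are all contained in $\S(t)$, so running the BMSa restricted to $\S(t)$ produces the same final set of polynomials as running it on all of $\I$; by the classical termination theorems of Sakata, this final set is a Groebner basis for $\bL(U)$.

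The main obstacle is the combinatorial verification that $l - s + m$ avoids $\S(t)$ in each of the sub-regions where $l$ can sit outside $\S(t)$ (the two axis tails and the outer ``wedge'' region), which in turn requires sharp bounds on the corner points of $\Delta(u^l)$; these bounds rely on the l-condition or g-condition to exclude degenerate syndrome tables in which the footprint is not yet aligned with the shape of $\S(t)$. Both ingredients are developed in \cite{BS2}, so the proof essentially consists of synthesizing Theorem~23 and Theorem~28 therein.
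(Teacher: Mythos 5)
First, a remark on what you are being compared against: the paper itself gives no proof of this theorem --- it is stated explicitly as a summary of Theorems~23 and~28 of \cite{BS2} --- so your sketch has to be judged against the argument actually used there (and reused throughout Section~3 of the present paper). Your reduction does not work. The central claim --- that for $l\notin\S(t)$ and $f\in F_l$ in normal form, every shifted index $l-s+m$ with $m\in\supp(f)$ again lies outside $\S(t)$ --- is false. Take $t=4$ and $l=(5,1)$, which is outside $\S(4)$ since both coordinates are nonzero and $l_1+l_2=6>4$; take $f=f^{(1)}$ with $s=LP(f)=(2,0)$ and a nonzero constant term (perfectly compatible with normal form and with $|\Delta(u^l)|\leq 4$, e.g.\ defining points $(2,0)$ and $(0,2)$). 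Then for $m=(0,0)$ one gets $l-s+m=(3,1)\in\S(4)$, because $3+1\leq 4$. Since $\supp(f)\setminus\{s\}\subseteq\Delta(u^l)$ only pushes the summation indices down and to the left of $l$, a point just outside the wedge $\{i+j\leq t\}$ is routinely shifted back into $\S(t)$ (whose axis arms moreover extend all the way to $2t-1$). And even where your claim did hold, it would not give $f[U]_l=0$: the entries $u_n$ with $n\notin\S(t)$ are genuine, generally nonzero, values of the doubly periodic array --- merely unknown to the decoder --- and Remark~\ref{fuera de rango es rec lin} is a nontrivial fact proved in \cite{BS2} \emph{for steps $l\in\S(t)$}; it is not a definitional convention you may invoke at $l\notin\S(t)$ (the only vanishing ``by convention'' in Definition~\ref{recurrencia lineal} is for $l\notin\Sigma_s$).

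The argument that actually works is the footprint-cardinality argument the paper deploys repeatedly (see the proofs of Lemma~\ref{d no pasa de 4}, Theorem~\ref{Caso l1 y l2 mayores que 1} and Lemma~\ref{inferencia en los ejes general}): if $f[U]_l\neq 0$ for some $f\in F_l$ with $s=LP(f)\preceq l$, the Berlekamp procedure forces $l-s\in\Delta(u^{l+1})\subseteq\Delta(U)$, and since $|\Delta(U)|=|\D_{\boldsymbol{\alpha}}(L(e))|=\omega(e)\leq t$ one reaches a contradiction by exhibiting more than $t$ points that $\Delta(u^{l+1})$ would then be forced to contain whenever $l\notin\S(t)$; hence every $f\in F_l$ satisfies $f[U]_l=0$ and no update fires. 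Your closing telescoping step and the appeal to Sakata's termination theory for the second assertion are fine once the first assertion is secured, but as written the first assertion is not established.
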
 

Another interesting problem, solved in \cite{BS2}, is the following one. Suppose we have an array of order $r_1\times r_2$, say $U$, and we want to see if it may be generated by linear recurrence relations. It is proved that, if $|\Delta(U)|\leq t\leq \lfloor\frac{r_i}{2}\rfloor$, for $i=1,2$, then we may found a Groebner basis for $\bL(U)$ by using only the entries $\{u_n\tq n\in \tau+\S(t)\}$ for some $\tau\in \I$ and $\S(t)$ satisfying the $l$-condition or the $g$-condition.

Once the linear recurring relations have been found, we may check if all known values of $U$ may be generated by them.

 \subsection{Locator decoding.}\label{locator ideal}

We keep all notation of the preceeding sections. As it is usual, we define an \textbf{abelian code (or bivariate code)} as an ideal $C$ in $\F(r_1,r_2)$. We denote by $d=d(C)$ its \textbf{minimum distance} and $t=\lfloor\frac{d-1}{2}\rfloor$ its \textbf{correction capability}. As it is well-known in the semisimple case, given a fixed $\boldsymbol{\alpha}\in \R$, the abelian code $C$ is totally determined by its \textbf{defining set},
 $$\D_{\boldsymbol{\alpha}}(C) = \left\{ m\in \I \tq c(\boldsymbol{\alpha}^{m})=0, \forall c\in C\right\}.$$ 
 It is also well-known that defining sets may be considered for sets of polynomials in $\F(r_1,r_2)$ (or $\L(r_1,r_2)$), and, moreover, if $B\subset \F(r_1,r_2)$ then
 $\D_{\boldsymbol{\alpha}}(B)=\D_{\boldsymbol{\alpha}}(\langle B\rangle)$.
 \smallskip
 
 Let us comment the locator decoding. Suppose a codeword $c\in C$ is sent and a polynomial $f\in \F(r_1,r_2)$ has been received. We know that $f=c+e$; where $e\in\F(r_1,r_2)$ is, as usual, the error polynomial which we want to determine. This will be done in two steps: the first one is to find the error locations; that is, $\supp(e)$. The second one is to determine its coefficients. The last step may be done directly by solving a system of linear equations with coeficients in $\F$. So, we shall concentrate at the first step.
 \smallskip
 
 To find the error positions or locations, we first construct the partial syndrome table of $e$. Clearly, for any $m\in \D_{\bs{\alpha}}(C)$ we have that $f(\bs\alpha^m)=e(\bs\alpha^m)$. Now, if there is $\tau\in\I$ such that $\S(t)+\tau\subset \D_{\bs{\alpha}}(C)$ then we may form a partial syndrome table $U$ afforded by $e$ and $\tau$ and we may apply the BMSa and Theorem~\ref{condiciones suficientes} claim that we get a Groebner basis for $\Lambda(U)$.
 \smallskip
 
Secondly, we find the error locations. To do this, we consider the following ideal.
 
  \begin{definition}{\cite[Definition 18]{BS2}}
 Let $e \in \F(r_1,r_2$. The locator ideal (with respect to $\bs\alpha\in \R$) is
  \[L(e)=\left\{f\in \L(r_1,r_2)\tq f(\boldsymbol{\alpha}^n)=0,\;\forall n\in \supp(e)\right\}.\]
 \end{definition}
 
 Note that, as $\L$ is a splitting field, we have that $\D_{\bs{\alpha}}(L(e))=\supp(e)$. There is an important relation between defining sets and footprints (in the classical sense for ideals \cite{Cox}) that we may find in \cite[Remark 7]{BS2} (together with other interesting properties), to witt
 \begin{equation}\label{cardinales de delta y conj definicion}
  |\Delta(L(e))|=|\D_{\boldsymbol{\alpha}}(L(e))|.
 \end{equation}
 
Now let $U$ be a syndrome table afforded by $\tau$ and $e$. We denote by $\overline{\bL(U)}$ the canonical projection of $\bL(U)\leq \L[\X]$ onto the ring $\L(r_1,r_2)$. Next result explains us the relationship between the ideals $\bL(U)$ and $L(e)$.

\begin{theorem}{\cite[Equation 3]{BS2}}\label{el locator y el delta}
In the setting above, we have
 \[\overline{\bL(U)}=L(e).\]

Hence 
 $$\D_{\boldsymbol{\alpha}}(\overline{\bL(U)})=\D_{\boldsymbol{\alpha}}(L(e))=\supp(e).$$
\end{theorem}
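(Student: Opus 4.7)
The plan is to reduce the identity $\overline{\bL(U)}=L(e)$ to the single algebraic identity
\[
f[U]_n \;=\; \sum_{k\in\supp(e)} e_k\,\bs\alpha^{k(\tau+n-s)}\, f(\bs\alpha^k),
\]
valid for every $f\in\L[\X]$ with $LP(f)=s$ and every $n\in\Sigma_s$. First I would write $f=\sum_{m\in\supp(f)}f_m\X^m$ and $u_{m+n-s}=e(\bs\alpha^{\tau+m+n-s})=\sum_{k\in\supp(e)}e_k\bs\alpha^{k(\tau+m+n-s)}$, then swap the order of summation and factor out $\bs\alpha^{k(\tau+n-s)}$. The inner sum over $m$ collapses to $f(\bs\alpha^k)$, yielding the displayed formula.

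With this identity at hand, the inclusion $L(e)\subseteq\overline{\bL(U)}$ is almost immediate. Given $\bar f\in L(e)$, pick any lift $f\in\L[\X]$; since $\bs\alpha^k$ is a common root of $X_1^{r_1}-1$ and $X_2^{r_2}-1$, the evaluation $f(\bs\alpha^k)$ coincides with $\bar f(\bs\alpha^k)$, which is $0$ for all $k\in\supp(e)$. The identity then forces $f[U]_n=0$ for $n\in\Sigma_s$, and the definition of $f[U]_n$ gives $0$ outside $\Sigma_s$ by convention. Hence $f\in\bL(U)$ and $\bar f\in\overline{\bL(U)}$.

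For the reverse inclusion $\overline{\bL(U)}\subseteq L(e)$, take $f\in\bL(U)$ and set $A_k=e_k\,\bs\alpha^{k(\tau-s)}\,f(\bs\alpha^k)$, so that
\[
\sum_{k\in\supp(e)} A_k\,\alpha_1^{k_1 n_1}\alpha_2^{k_2 n_2} \;=\; 0 \quad\text{for every } n\in\Sigma_s.
\]
The heart of the argument, and what I expect to be the main obstacle, is a two-stage Vandermonde inversion: since $\Sigma_s$ is cofinite along each axis, one may fix $n_2$ and let $n_1$ range over a full residue system modulo $r_1$, grouping the $A_k$'s according to their first coordinate. The Vandermonde matrix built from the distinct powers $\alpha_1^{k_1}$ ($k_1\in\Z_{r_1}$) is invertible since $\alpha_1$ is primitive, so each partial sum $\sum_{k_2:(k_1,k_2)\in\supp(e)} A_{(k_1,k_2)}\alpha_2^{k_2 n_2}=0$ for every $n_2$ in a full residue system modulo $r_2$. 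A second Vandermonde inversion in the variable $n_2$, using that $\alpha_2$ is primitive, gives $A_{(k_1,k_2)}=0$ for every $k\in\supp(e)$, and therefore $f(\bs\alpha^k)=0$ since $e_k\neq 0$ and the factor $\bs\alpha^{k(\tau-s)}$ is a unit. Thus $\bar f\in L(e)$.

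Finally, for the displayed corollary, I would invoke that $\L$ is a splitting field: from $\D_{\bs\alpha}(L(e))=\supp(e)$ (a direct consequence of the definition of $L(e)$ together with semisimplicity), and the equality $\overline{\bL(U)}=L(e)$ just established, the chain $\D_{\bs\alpha}(\overline{\bL(U)})=\D_{\bs\alpha}(L(e))=\supp(e)$ follows at once.
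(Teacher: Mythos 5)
Your proof is correct. Note that the paper itself offers no argument for this statement --- it is imported verbatim from \cite[Equation 3]{BS2} --- so there is no internal proof to compare against; what you have written is the standard derivation (and, as far as one can tell, the same one used in the cited source): the syndrome--locator identity $f[U]_n=\sum_{k\in\supp(e)}e_k\,\bs\alpha^{k(\tau+n-s)}f(\bs\alpha^k)$ gives $L(e)\subseteq\overline{\bL(U)}$ term by term, and the converse follows by inverting the resulting character sums. The only point that deserves an explicit word is the one you gesture at with ``$\Sigma_s$ is cofinite along each axis'': membership in $\bL(U)$ forces $f[U]_n=0$ on all of $\Sigma_s$, and since every residue class modulo $(r_1,r_2)$ has a representative there, the box $[s_1,s_1+r_1-1]\times[s_2,s_2+r_2-1]$ supplies the full product of complete residue systems needed for the two Vandermonde inversions (invertible because $\bs\alpha\in\R$ is a pair of \emph{primitive} roots, so the nodes $\alpha_i^{k_i}$ are distinct). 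With that made explicit, the argument is complete, and the final display follows as you say from $\D_{\bs\alpha}(L(e))=\supp(e)$, which the paper records just before the theorem.
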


Here, it appears the interest of the BMSa. If we find a Groebner basis, say $\G$, for $\Lambda(U)$ then we have that $\D_{\bs\alpha}(\G)=\supp(e)$, the error locations.\\

We finally comment a problem that we may have when we are constructing the partial syndrome table and whose study is the interest of this paper. Let us explain it by an example.

\begin{example}\label{ejemplo planteamiento}
 We set $q=2$, $r_1=r_2=15$, so that $\L=\F_{2^4}$, with primitive root $a\in \L$. Let $C$ be the abelian code in $\F_2(15,15)$ with defining set $\D_{\bs \alpha}(C)=Q(0,0)\cup Q(0,1)\cup Q(0,3)\cup Q(0,5)\cup Q(1,0)\cup Q(3,0)\cup Q(5,0)\cup Q(1,1)\cup Q(2,1)$, with respect to some $\bs\alpha\in \R$, where $Q(i,j)$ denotes the $2$-orbit of $(i,j)$ modulo $(r_1,r_2)$ (see \cite[Definition 38]{BS2} and the paragraph below it).
 
 By results in \cite{BBCS2}, the strong apparent distance of $C$ is $sd^*(C)=8$, so we set $t=3$. Suppose that a polynomial $f\in \F_2(15,15)$ was received. Then we compute the syndrome values over $\D_{\bs \alpha}(C)$. We write for $\tau=(0,0)$ the values corresponding to $\S(3)$.
 
  \[U=\left(\begin{array}{lllllllllllllll}
   1 & a^6 & a^{12} & a^{9} & a^{9}& 0\\
   
   a^{12} & a^{3} & u_{(1,2)}  \\
   
   a^{9} & a^{8}  \\
   
   a^{7} \\
   
   a^{3}\\
   a^5
  \end{array}\right)\]
  
The value $u_{(1,2)}$ is missing because $(1,2)\not\in \D_{\bs\alpha}(C)$. Hence we cannot implement the algorithm through the entire set $\S(3)$.
\end{example}

As we commented in Introduction, we want to find alternative procedures to infer such values or to get somehow the reduced Groebner basis for $\Lambda(U)$.

\section{Inference of values.}

\begin{concepto}[General Assumption.]\label{Situac general}
 Throughout this section, we assume that $\F$, $\L$, $t,r_1,r_2,\tau$ and $e$ are defined as in previous sections, with  $\omega(e)\leq t\leq \lfloor\frac{r_i}{2}\rfloor$, for $i=1,2$. Let $U$ be the syndrome table afforded by $e$ and $\tau$.

 We suppose that following the implementation of the BMSa (under the lexicographic ordering with $\S(t)$ satisfying the l-condition or the graduate ordering, with $\S(t)$ satisfying the g-condition) we have constructed  for $l=(l_1,l_2)\in \S(t)$, the sets $\Delta(u^l)$ and $F_l=\left\{f^{(1)},\dots,f^{(d)}\right\}\subset\bL(u^l)$. We set $s^{(i)}=LP(f^{(i)})$ and $s^{(i)}=(s^{(i)}_1,s^{(i)}_2)$, for $i=1,\dots,d$.\smallskip
 
 We want to compute the step $l$ to obtain $\Delta(u^{l+1})$ and $F_{l+1}$, but, we do not know the value of the entry $u_{l}$.
\end{concepto}

 So, we cannot compute directly $f[U]_l$ for every $f\in F_l$, satisfying $LP(f)\preceq l$. We need an alternative way to estimate such value or to get the Groebner basis for $\Lambda(U)$.
 
 As one may expect, we cannot estimate \textit{any} value of $U$; in fact we have to restrict ourselves to unknown values having ``border indexes''; that is, the pairs $l=(l_1,l_2)\in \S(t)$ such that:
 
 \begin{enumerate}
 \item If $l_1,l_2\neq 0$ then $l_1+l_2=t$.
 \item \begin{enumerate}
        \item If $l_1=0$ then $t+s_2^{(d)}-1\leq l_2\leq 2t-1$.
        \item If $l_2=0$ then $t+s_1^{(1)}-1\leq l_1\leq 2t-1$.
       \end{enumerate}
\end{enumerate}

The pairs of the first class are clearly border points. To understand the criteria for the second class, the reader may see Lemma~\ref{inferencia en los ejes general} below.

We begin with two technical results.

\begin{lemma}\label{area rectangulo Delta}
 For $(a,b)\in \Sigma_0$, with $a,b\neq 0$, we have that $|\Delta_{(a-1,b-1)}|=ab$.
\end{lemma}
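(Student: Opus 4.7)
The plan is a direct unfolding of the definitions, so this is really a counting exercise rather than a substantive proof. First I would recall that $\Sigma_0 = \{(n_1,n_2) \in \Z\times\Z \mid 0 \preceq (n_1,n_2)\} = \N\times\N$ (with $0$ included), so the condition $n \in \Sigma_0$ imposes $n_1,n_2 \geq 0$.

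Next I would unfold $\Delta_{(a-1,b-1)}$ using its definition in Notation~\ref{los Delta_s}: it consists of all $n = (n_1,n_2) \in \Sigma_0$ such that $n \preceq (a-1,b-1)$, which by the definition of $\preceq$ on $\Z\times\Z$ means $n_1 \leq a-1$ and $n_2 \leq b-1$. Combining the two constraints,
\[
\Delta_{(a-1,b-1)} = \{0,1,\dots,a-1\} \times \{0,1,\dots,b-1\}.
\]
Since $a,b \neq 0$ and $a,b$ are nonnegative (so $a,b \geq 1$), both factors are nonempty with cardinalities $a$ and $b$ respectively, giving $|\Delta_{(a-1,b-1)}| = ab$.

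The only thing to be careful about is the hypothesis $a,b \neq 0$: without it, if, say, $a = 0$, then $a-1 = -1$ and no $n_1 \geq 0$ satisfies $n_1 \leq -1$, so the set would be empty and the formula $ab = 0$ would still hold trivially, but the combinatorial interpretation as an $a \times b$ rectangle requires the positivity. There is no real obstacle; the lemma is essentially a named bookkeeping fact that will be reused in the subsequent inference arguments, where the footprints are compared against rectangular regions of this form.
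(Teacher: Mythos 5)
Your proof is correct; the paper dismisses this lemma with the single word ``Straightforward,'' and your direct identification of $\Delta_{(a-1,b-1)}$ with the rectangle $\{0,1,\dots,a-1\}\times\{0,1,\dots,b-1\}$ is exactly the intended counting argument. Your reading of $\Sigma_0$ as $\N\times\N$ (rather than the paper's stray assertion that $\Sigma_0=\Z\times\Z$) is the one consistent with the definition of $\Sigma_s$ and is what makes the set finite, so that observation is a point in your favour rather than a gap.
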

\begin{proof}
 Straightforward.
\end{proof}

\begin{lemma}\label{primer paso no cero}
We set $\F$, $\L$, $t,r_1,r_2,\tau$ and $e$, as above, with  $\omega(e)\leq t\leq \lfloor\frac{r_i}{2}\rfloor$, for $i=1,2$. Let $U$ be the syndrome table afforded by $e$ and $\tau$, having the $l$-condition or the $g$-condition as required. If $l=(l_1,l_2)\in \Sigma_0$ is the first element for which $u_l\neq 0$ then
 \begin{enumerate}
  \item  Under the lexicographic ordering, $F_{l+1}=\{X_1,X_2^{l_2+1}\}$.
  \item Under the graduate ordering $F_{l+1}=\{X_1^{l_1+1},X_2^{l_2+1}\}$
 \end{enumerate}
\end{lemma}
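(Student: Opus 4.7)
Since $l$ is the first index with $u_l \neq 0$, every value $u_m$ with $m <_T l$ vanishes; thus for each such $m$ the polynomial $1 \in F_{(0,0)}$ gives $1[u^{m+1}]_m = u_m = 0$, no update is triggered, and $F_m = \{1\}$ and $\Delta(u^m) = \emptyset$ persist throughout. At step $l$, however, $1[u^{l+1}]_l = u_l \neq 0$; since $l - LP(1) = l \notin \emptyset = \Delta(u^l)$, the BMSa must apply Procedure 2, which forces $l \in \Delta(u^{l+1})$ and replaces $1$ by a new minimal set. My plan is to exhibit the stated $F_{l+1}$ and verify it is a minimal set for $u^{l+1}$.

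\textbf{$F_{l+1}$ is generating with the claimed footprint.} In the graded case, $F_{l+1} = \{X_1^{l_1+1}, X_2^{l_2+1}\}$ consists of monomials, whose defining points $(l_1+1, 0)$ and $(0, l_2+1)$ form the required staircase, with footprint $\Delta(F_{l+1}) = \Delta_{(l_1, l_2)}$ that contains $l$ as Procedure 2 demands. To check that $X_1^{l_1+1} \in \bL(u^{l+1})$, I would show that for every $k \succeq (l_1+1, 0)$ with $k <_T l+1$, the shifted index $k - (l_1+1, 0)$ satisfies $k - (l_1+1, 0) <_T l$, so $u_{k - (l_1+1, 0)} = 0$ by the hypothesis on $l$. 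This is a short total-degree computation that splits into the cases $l_1 > 0$ (where $l+1 = (l_1-1, l_2+1)$) and $l_1 = 0$ (where $l+1 = (l_2+1, 0)$ jumps to the next total degree). The argument for $X_2^{l_2+1}$ is symmetric. The lexicographic case proceeds identically, except that the $l$-condition forces the first nonzero value to lie on the axis $X_1 = 0$, so $l = (0, l_2)$ and $X_1^{l_1+1}$ collapses to $X_1$, with footprint $\Delta_{(0, l_2)}$.

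\textbf{Minimality.} For any $g \in \L[\X]$ with $LP(g) = (a, b) \in \Delta(F_{l+1})$, I would compute the discrepancy at $l$:
\[g[u^{l+1}]_l = g_{(a,b)}\, u_l + \sum_{\substack{m \in \supp(g)\\ m <_T (a,b)}} g_m\, u_{m + l - (a,b)}.\]
Because $(a, b) \preceq l$, the shift $l - (a, b)$ lies in $\Sigma_0$, and monotonicity of $\leq_T$ under a nonnegative shift gives $m + l - (a, b) <_T l$ for each $m <_T (a, b)$; hence $u_{m + l - (a, b)} = 0$ and $g[u^{l+1}]_l = g_{(a,b)} u_l \neq 0$, so $g \notin \bL(u^{l+1})$. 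This rules out every polynomial with leading exponent inside the footprint, closing minimality. The most technical piece is the generation check in the graded case, since $l+1$ changes its total degree depending on whether $l$ sits on an axis; the rest follows cleanly from monotonicity of the monomial order under nonnegative shifts.
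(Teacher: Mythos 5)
Your overall route is genuinely different from the paper's: instead of tracing what Procedure~2 of the BMSa actually constructs at the step $l$, you exhibit the candidate set and verify directly that it is a minimal set of polynomials for $u^{l+1}$ (generation plus the footprint and minimality conditions). That verification is essentially sound: the reduction of the lexicographic case to $l=(0,l_2)$ via the $l$-condition is correct, and your minimality computation at the index $l$ is exactly the right one, since every $(a,b)$ in the footprint $\Delta_{(l_1,l_2)}$ satisfies $(a,b)\preceq l$ and translation-invariance of $\leq_T$ kills all non-leading terms against entries $u_m$ with $m<_T l$. (One small slip: for a monomial $\X^s$, Definition~\ref{recurrencia lineal} gives $\X^s[U]_k=u_{s+k-s}=u_k$, not $u_{k-s}$; your generation check evaluates the wrong entry, although both entries vanish for $k<_T l$, so the conclusion survives.)

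The genuine gap is that the lemma asserts what the set $F_{l+1}$ \emph{produced by the algorithm} is, and a minimal set of polynomials for $u^{l+1}$ is not unique, so exhibiting one does not identify the algorithm's output. Concretely, in the lexicographic case the generation condition on the polynomial with leading term $X_1$ is vacuous over $u^{l+1}$ (no index $k<_T l+1$ satisfies $(1,0)\preceq k$), so $\{X_1+c,\;X_2^{l_2+1}\}$ is, for every $c\in\L$, another minimal set in normal form; your minimality argument only excludes polynomials whose \emph{leading} exponent lies in the footprint and says nothing about the tails of the polynomials the algorithm writes down. To close this you must inspect the update itself: since $F_l=\{1\}$ (so $d=1$) and $G_l=\emptyset$, only constructions (3) and (5) of Procedure~2 in \cite{BS2} are available, and these produce exactly the pure powers $X_1^{l_1+1}$ and $X_2^{l_2+1}$ with no lower-order terms. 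That is precisely the step the paper's proof supplies and yours omits.
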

\begin{proof}
 \textit{1.} As the $l$-condition is satisfied then $l_1=0$, $F_l=\{1\}$ and $\Delta(u^l)=\emptyset$. Then, $1[U]_l=u_l\neq 0$ and as $s^{(1)}=(0,0)$, then $l-s^{(1)}\notin \Delta(u^l)$; so that, we apply Procedure 2 (see Paragraph~\ref{descripcion del aBMS}).
 
By analizing the five cases of constructions considered in Procedure 2 (see \cite[pp. 271-272]{BS2}) one may check that it is only possible to apply cases (3) and (5), because all others need that $d\geq 2$. The mentioned cases get $F_{l+1}=\{X_ 1,X_2^{l_2+1}\}$, $G=\{1\}$ and $\Delta(u^{l+1})=\{0\}$.
 
 \textit{2.} In this case, one may check that, again, it is only possible to apply the constructions considered in cases (3) and (5), by the same reason. As $l \in \{(0,0),\,(1,0),\,(0,1)\}$ we get the result.
\end{proof}

\begin{lemma}\label{d no pasa de 4}
In the setting of General Assumption~\ref{Situac general}, suppose that $l_1,l_2\neq 0$ and $l_1+l_2=t$. If $d\geq 4$ then for $i\in \{1,d\}$ there is a polynomial $f^{(i)}\in F_l$ such that $LP(f^{(i)})\preceq l$ and  $f^{(i)}[U]_l=0$.
\end{lemma}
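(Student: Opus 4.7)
The plan is to split the argument into two parts: first, establish the leading-power condition $LP(f^{(i)})\preceq l$ for some $i\in\{1,d\}$ by bounding the footprint size; second, deduce the vanishing condition $f^{(i)}[U]_l=0$ from the structure forced by $d\geq 4$.

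For the leading-power part, I would exploit the strict monotonicity of the defining-point coordinates. Since $d\geq 4$, the chains $s^{(1)}_1>s^{(2)}_1>\dots>s^{(d)}_1=0$ and $0=s^{(1)}_2<s^{(2)}_2<\dots<s^{(d)}_2$ force $s^{(2)}_1\geq 2$ and $s^{(3)}_2\geq 2$, so by Lemma~\ref{area rectangulo Delta} the rectangle $\Delta_{(s^{(2)}_1-1,\,s^{(3)}_2-1)}\subseteq\Delta(u^l)$ contains the off-axis point $(1,1)$. Together with the $s^{(1)}_1$ points of the $x$-axis lying in $\Delta_{(s^{(1)}_1-1,\,s^{(2)}_2-1)}$ and the $s^{(d)}_2$ points of the $y$-axis lying in $\Delta_{(s^{(d-1)}_1-1,\,s^{(d)}_2-1)}$, which overlap only at the origin, this yields
\[
|\Delta(u^l)|\;\geq\;(s^{(1)}_1+s^{(d)}_2-1)+1\;=\;s^{(1)}_1+s^{(d)}_2.
\]
On the other hand, monotonicity of footprints (Remark~\ref{Conjunto de los G}) together with $|\Delta(L(e))|=|\supp(e)|\leq t$ (from Theorem~\ref{el locator y el delta} and~(\ref{cardinales de delta y conj definicion})) gives $|\Delta(u^l)|\leq t=l_1+l_2$. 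Hence $s^{(1)}_1+s^{(d)}_2\leq l_1+l_2$, and if both $s^{(1)}_1>l_1$ and $s^{(d)}_2>l_2$ held, the sum would be at least $l_1+l_2+2$---contradiction. Thus one of $s^{(1)}_1\leq l_1$ or $s^{(d)}_2\leq l_2$ must hold, delivering $LP(f^{(i)})\preceq l$ for the corresponding $i\in\{1,d\}$.

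For the vanishing half, suppose without loss of generality $s^{(1)}_1\leq l_1$; the other case is handled symmetrically via $f^{(d)}$. I would argue $f^{(1)}[U]_l=0$ by contradiction. If $f^{(1)}[U]_l\neq 0$, the BMSa applies Berlekamp's procedure at step $l$. Under Procedure~2 ($l-s^{(1)}\notin\Delta(u^l)$) the footprint strictly grows, acquiring the new point $l-s^{(1)}=(l_1-s^{(1)}_1,l_2)$ together with the new staircase block it determines; repeating the lower-bound count from the first part with the updated defining points, and using $s^{(1)}_1+s^{(d)}_2\leq t$, the resulting $|\Delta(u^{l+1})|$ would exceed $t$, contradicting $|\Delta(u^{l+1})|\leq t$. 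Under Procedure~1 ($l-s^{(1)}\in\Delta(u^l)$) the footprint is unchanged, but by Remark~\ref{Conjunto de los G} there is a companion polynomial $g^{(j)}\in G_l$ whose multidegree and defining pair match the relevant corner point; this pins down the replacement of $f^{(1)}$ to a form incompatible with preserving the chain of length $d\geq 4$ with bottom coordinate $s^{(1)}_1$.

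The main obstacle is the vanishing half, and specifically the Procedure~1 subcase: there the footprint does not grow, so the contradiction must be extracted from the finer structure of the minimal set together with the companion polynomials in $G_l$. The Procedure~2 subcase is essentially a mechanical extension of the counting done for the leading-power part, tracking the new defining points created by Berlekamp's construction and observing that inserting the corner $l-s^{(1)}$ forces strictly more than $t-|\Delta(u^l)|$ additional points into the footprint.
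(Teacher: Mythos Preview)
Your first half (establishing $LP(f^{(i)})\preceq l$ for some $i\in\{1,d\}$) is fine and matches the paper's approach up to bookkeeping.

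The second half has a genuine gap in the Procedure~1 subcase. Your claimed contradiction---that the Berlekamp replacement of $f^{(1)}$ is ``incompatible with preserving the chain of length $d\geq 4$ with bottom coordinate $s^{(1)}_1$''---does not hold. Under Procedure~1 the replacement $h_{f^{(1)},g^{(j)}}$ has exactly the same multidegree $s^{(1)}$ as $f^{(1)}$; the defining points, the value of $d$, and the footprint are all preserved verbatim. Nothing about the companion $g^{(j)}$ forces any structural change to the chain, so no contradiction emerges from this line.

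The split into Procedure~1 and Procedure~2 is in fact unnecessary, and dropping it fixes the gap. The paper argues uniformly: if $f^{(1)}[U]_l\neq 0$ then, \emph{in either procedure}, the point $l-s^{(1)}=(l_1-s^{(1)}_1,l_2)$ lies in $\Delta:=\Delta(u^{l+1})$ (for Procedure~1 it was already there; for Procedure~2 it is newly inserted). Since $\Delta$ is a staircase, this forces an explicit L-shaped set $A\subseteq\Delta$---built from the top row $(0,l_2),\dots,(l_1-s^{(1)}_1,l_2)$, the right column $(l_1-s^{(1)}_1,0),\dots,(l_1-s^{(1)}_1,l_2-1)$, and the $x$-axis segment up to $(s^{(1)}_1-1,0)$---with $|A|=t$. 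Hence $A=\Delta$. But $d\geq 4$ forces $s^{(d)}_2\geq 3$, which places a point such as $(0,1)$, $(0,l_2+1)$, or $(1,1)$ (depending on whether $l_2=1$ and whether $l_1=s^{(1)}_1$) into $\Delta\setminus A$, a contradiction. Your Procedure~2 sketch is this same counting argument in disguise; the point is that it applies equally well when $l-s^{(1)}$ already sits in $\Delta(u^l)$.
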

\begin{proof}
Set $\Delta=\Delta(u^{l+1})$. We recall that $|\Delta|\leq t$. First, note that, as $s_1^{(1)}+s_2^{(d)}-1\leq t=l_1+l_2$ then $LP(f^{(1)}) \preceq l$ or $LP(f^{(d)}) \preceq l$, so that there is at least one element $i\in \{1,d\}$ for which $LP(f^{(i)})\preceq l$.

Suppose that $LP(f^{(1)})\preceq l$. If $f^{(1)}[U]_l=0$ the we are done. Otherwise, having in mind that $l-s^{(1)}\in\Delta$ and $l_2\neq 0$, we have that the set $A=\{(0,l_2),\dots,(l_1-s_1^{(1)},l_2)\}\cup\{(l_1-s_1^{(1)},0),\dots,(l_1-s_1^{(1)},l_2-1)\}\cup \{(x,0)\tq x\in \{0,\dots,s_1^{(1)}-1\}\setminus\{(l_1-s_1^{(1)},0)\}$ verifies that $A\subseteq \Delta$ and (as it is disjoint union) $|A|=l_1-s_1^{(1)}+1+l_2+s_1^{(1)}-1=t$. So that $A=\Delta$.

Now, since $d\geq 4$ then $s_2^{(d)}\geq 3$ and so, if $l_2=1$ then the point $(0,l_2+1)\in \Delta\setminus A$, a contradiction. If $l_2\geq 2$ then, if $l_1\neq s_1^{(1)}$, the point $(0,1)\in \Delta\setminus A$; in case $l_1 = s_1^{(1)}$, the point $(1,1)\in \Delta\setminus A$. Both cases drive us to a contradiction 

The case $LP(f^{(d)})\preceq l$ is completely analogous.
\end{proof}

\begin{theorem}\label{Caso l1 y l2 mayores que 1}
In the setting of General Assumption~\ref{Situac general} suppose that $l_1+l_2=t$. If $l_1,l_2>1$  then there exist $f\in F_l$ such that $LP(f)\preceq l$ for which $f[U]_l=0$.
\end{theorem}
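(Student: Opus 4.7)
The plan is to split into cases on $d=|F_l|$. The case $d\geq 4$ is handled directly by Lemma~\ref{d no pasa de 4}, which exhibits $i\in\{1,d\}$ with $LP(f^{(i)})\preceq l$ and $f^{(i)}[U]_l=0$. The case $d=1$ cannot arise at such an $l$: since the $l$- or $g$-condition is in force, the BMSa has already encountered a nonzero syndrome value at some earlier index, and Lemma~\ref{primer paso no cero} then forces $|F|=2$ at that step, while the inclusion $\Delta(u^k)\subseteq \Delta(u^{k+1})$ of Remark~\ref{Conjunto de los G} prevents $d$ from decreasing thereafter. So the remaining work is for $d=2$ and $d=3$.

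For $d\in\{2,3\}$, I would mimic the argument of Lemma~\ref{d no pasa de 4}. By Lemma~\ref{area rectangulo Delta}, $\Delta_{(s_1^{(1)}-1,s_2^{(d)}-1)}$ has size $s_1^{(1)}\cdot s_2^{(d)}$, and since this rectangle lies in $\Delta(u^l)$, Theorem~\ref{el locator y el delta} together with \eqref{cardinales de delta y conj definicion} gives $s_1^{(1)}\cdot s_2^{(d)}\leq t=l_1+l_2$. If both $s_1^{(1)}>l_1$ and $s_2^{(d)}>l_2$, then $s_1^{(1)}\cdot s_2^{(d)}\geq(l_1+1)(l_2+1)=l_1l_2+l_1+l_2+1>t$, a contradiction; so at least one of $LP(f^{(1)}),LP(f^{(d)})$ is $\preceq l$. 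By the symmetry $X_1\leftrightarrow X_2$, assume $s_1^{(1)}\leq l_1$, and suppose for contradiction that $f^{(1)}[U]_l\neq 0$. Then the BMSa updates at $l$ by one of the Procedures of Paragraph~\ref{descripcion del aBMS}, adding the point $l-s^{(1)}$ to $\Delta(u^{l+1})$ exactly as in the proof of Lemma~\ref{d no pasa de 4}. Enumerating the points forced into $\Delta(u^{l+1})$ by combining the axis rows attached to $s^{(1)}$ and $s^{(d)}$ (of lengths $s_1^{(1)}$ and $s_2^{(d)}$) with the newly forced row/column through $l-s^{(1)}$ (of length $l_2+1$) gives the same set $A$ of size $t$ used in the lemma; the final step locates one more point in $\Delta(u^{l+1})\setminus A$, producing $|\Delta(u^{l+1})|>t$ and contradicting \eqref{cardinales de delta y conj definicion}.

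The main obstacle is to supply this extra point when $d\leq 3$, since the easy argument $s_2^{(d)}\geq 3$ from the proof of Lemma~\ref{d no pasa de 4} is no longer available (for $d=2$ one only has $s_2^{(d)}\geq 1$, and for $d=3$ only $s_2^{(d)}\geq 2$). This is precisely the point at which the hypothesis $l_1,l_2>1$ does the work: with both coordinates strictly greater than $1$, the rectangle $\Delta_{(l_1,l_2)}$ has enough interior room to locate a concrete extra point such as $(0,1)$, $(1,1)$, or $(0,l_2+1)$ in $\Delta(u^{l+1})\setminus A$, obtained by reading off which external corner of the updated staircase is forced by the fresh defining point $l-s^{(1)}$. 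Were $l_1=1$ or $l_2=1$, a thin $(1\times l_2)$ or $(l_1\times 1)$ footprint of area $\leq t$ would swallow the candidate extra point and block the contradiction, which is exactly why the boundary cases are excluded here and must be treated through the axis-inference lemma stated later.
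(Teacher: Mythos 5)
Your reduction to $2\leq d\leq 3$ is correct and matches the paper, but the argument you give for those two cases has a genuine gap, and one of its supporting claims is false. First, the minor error: for $d=3$ the rectangle $\Delta_{(s_1^{(1)}-1,\,s_2^{(d)}-1)}$ is \emph{not} contained in $\Delta(u^l)$ — the footprint is the staircase $\Delta_{(s^{(1)}_1-1,s^{(2)}_2-1)}\cup\Delta_{(s^{(2)}_1-1,s^{(3)}_2-1)}$, which is strictly smaller — so the bound $s_1^{(1)}s_2^{(d)}\leq t$ is unavailable. The conclusion you want (at least one of $LP(f^{(1)}),LP(f^{(d)})$ is $\preceq l$) still follows, but from the additive bound $s_1^{(1)}+s_2^{(d)}-1\leq|\Delta|\leq t$ coming from the axis points, which is what the paper uses.

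The real gap is the core strategy: fixing a single $f^{(i)}$ with $i\in\{1,d\}$ and deriving $|\Delta(u^{l+1})|>t$ from $f^{(i)}[U]_l\neq 0$ alone. This fails in several configurations that the hypothesis $l_1,l_2>1$ does not exclude. If $l_1=s_1^{(1)}$, then $l-s^{(1)}=(0,l_2)$ lies on an axis; adding it brings $|\Delta|$ up to at most $t$ and yields no contradiction, so $f^{(1)}[U]_l\neq 0$ is perfectly possible — one must assume \emph{both} $f^{(1)}[U]_l\neq 0$ and $f^{(d)}[U]_l\neq 0$ to force an off-axis point and conclude only that \emph{one of the two} gives a recurrence (the paper's Case 2(B)). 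Worse, when $d=3$ and $l_2<s_2^{(3)}$ (with $l_1=s_1^{(1)}$), the point $l-s^{(1)}$ is already in $\Delta(u^l)$ and $LP(f^{(3)})\not\preceq l$, so neither endpoint polynomial can serve as the witness; the paper shows the witness is the middle polynomial $f^{(2)}$, whose multidegree is forced to be $(1,1)$ by the cardinality constraint, and your proposal never considers $f^{(2)}$ at all. Finally, your appeal to ``enough interior room to locate a concrete extra point such as $(0,1)$, $(1,1)$, or $(0,l_2+1)$'' is not substantiated: in Lemma~\ref{d no pasa de 4} those extra points are guaranteed by $s_2^{(d)}\geq 3$, which is exactly what is lost for $d\leq 3$, and whether such a point is genuinely outside the set $A$ depends on the relative position of $l$ and the defining points. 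This is why the paper's proof is an exhaustive case split on $l_1$ versus $s_1^{(1)}$ and $l_2$ versus $s_2^{(d)}$ rather than a single uniform counting argument.
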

\begin{proof}
 By Lemma~\ref{d no pasa de 4} and since $\S(t)$ satisfies the $l$-condition or the $g$-condition, we may assume that $2\leq d\leq 3$.
  Set $\Delta=\Delta(u^{l+1})$. We shall proceed by considering cases. Note that  $l_1,l_2\neq 0$ implies $t\geq 2$; besides, it always happen that $s_1^{(1)}+s_2^{(d)}-1\leq |\Delta|$.\\
  
   \textbf{CASE 1.} Suppose that $l_1< s_1^{(1)}$. Then we have that $l_2\geq s_2^{(d)}$. It may happen that $f^{(i)}[U]_l\neq 0$ for $1<i\leq d$ (note that $f^{(1)}[U]_l=0$ by convention in Definition~\ref{recurrencia lineal}). 
 
\textbf{Case 1(A).} If $l_2=s_2^{(d)}$ then $l_1+1=s_1^{(1)}$ and so  $s_1^{(1)}+s_2^{(d)}=t+1$ (that is, there are already $t$ points on the axes) and $2\leq d\leq 3$.
 
\textbf{Case 1(A)(I).} If $d=2$ then, having in mind that $\Delta(u^l)$ must be a rectangle, we have that $s_2^{(d)}=1=l_2$, because we already have $t$ points on the axes and $s_1^{(1)}\geq 2$. But this contradicts our assumption that $l_1,l_2>1$.

\textbf{Case 1(A)(II).} Suppose that $d=3$; so that $s^{(2)}=(1,1)$ must be a defining point (otherwise $|\Delta(u^l)|>t$) and $s^{(2)}\preceq l$. If $f^{(2)}[U]_l\neq 0$ then $(l_1-1,l_2-1)\in\Delta$ (recall that $\Delta=\Delta(u^{l+1})$), which already have $t$-points on the axes, so we must have that $l_1=1$ or $l_2=1$ (this last equality cannot happen because $d=3$ and we are assuming that $l_2=s_2^{(d)}$). This, again, contradicts our assumption that $l_1,l_2>1$.

\textbf{Case 1(B).} Suppose that $l_2>s_2^{(d)}$; so that, $s^{(d)}\preceq l$. If $f^{(d)}[U]_l\neq 0$, then $(l_1,l_2-s_2^{(d)})\in\Delta$ and it does not belong to any of the axes. Then, we have that $t\geq |\Delta|\geq s_1^{(1)}+s_2^{(d)}-1+l_1+l_2-s_2^{(d)}-1=t+s^{(1)}_1-2>t$, because $s^{(1)}_1 > 2$, which is impossible. Hence $f^{(d)}[U]_l = 0$.

\textbf{CASE 2.} Suppose that $l_1= s_1^{(1)}$; so that $LP(f^{(1)})\preceq l$.
 
\textbf{Case 2(A).} If $l_2 < s_2^{(d)}$ then we must have that $s_1^{(1)}+s_2^{(d)}-1\leq t < s_1^{(1)}+s_2^{(d)}$, from which $s_1^{(1)}+s_2^{(d)}=t+1$.
 
\textbf{Case 2(A)(I).} If $d=2$ then $t\geq |\Delta|=s_1^{(1)}s_2^{(2)}$, from which $s_1^{(1)}s_2^{(2)}< s_1^{(1)}+s_2^{(2)}$. This only happens for the values $s_1^{(1)}=1$ or $s_2^{(2)}=1$, which is not the case. So $d\neq 2$.

\textbf{Case 2(A)(II).} Now, suppose that $d=3$. As $s_1^{(1)}+s_2^{(3)}=t+1$ then $s^{(2)}=(1,1)$. If $f^{(2)}[U]_l\neq 0$ then $l-(1,1)\in \Delta$ and it does not belong to the axes; so that $|\Delta|>t$, which is impossible. Hence $f^{(2)}[U]_l=0$.
 
 \textbf{Case 2(B).} Suppose that $l_2 \geq s_2^{(d)}$. If $f^{(1)}[U]_l\neq 0$ then $l-s^{(1)}=(0,l_2)\in \Delta$; so that $|\Delta|=t$, counting only its points on the axes. If it also happen that $f^{(d)}[U]\neq 0$ then $l-s^{(d)}=(l_1,l_2-s_2^{(d)})\in \Delta$, and it is not a point of $\Delta(u^{(l)})$, which is impossible. Hence $f^{(1)}[U]_l = 0$ or $f^{(d)}[U]_l= 0$.

\textbf{CASE 3.} Suppose that $l_1> s_1^{(1)}$. If $f^{(1)}[U]_l\neq 0$ then $(l_1-s_1^{(1)},l_2)\in \Delta$ from which $|\Delta|\geq s_1^{(1)}+s_2^{(d)}-1+l_1-s_1^{(1)}+l_2-1=t+s_2^{(d)}-2$, so we only have to consider the values $s_2^{(d)}\leq 2$.

Since $l_2 < s_2^{(d)}$ implies $l_2=1$, we only consider the case in which $l_2\geq s_2^{(d)}$. As $f^{(1)}[U]_l\neq 0$ then, if $2\leq l_2=s_2^{(d)}$ then $s^{(d)}\in \Delta$, and so $|\Delta|=t+s_2^{(d)}-1>t$, which is impossible. If $l_2>s_2^{(d)}$ then $t\geq |\Delta|\geq l_1-s^{(1)}_1+l_2+s^{(d)}_2+s^{(1)}_1\geq t+1$, which is impossible too. Hence $f^{(1)}[U]_l = 0$.
\end{proof}

\begin{remark}
 Following the proof of the theorem above, we may specify which element or elements $f\in F_l$ satisfying $LP(f)\preceq l$ may give us a linear recurring relation.
 \begin{enumerate}
  \item If $l_1 \leq s_1^{(1)}$ and $l_2\leq s_2^{(d)}$ then $d=3$ and $f^{(2)}[U]_l=0$.
  \item If $l_1<s_1^{(1)}$ and $l_2>s_2^{(d)}$ then $f^{(d)}[U]_l=0$.
  \item If $l_1=s_1^{(1)}$ and $l_2 \geq s_2^{(d)}$ then $f^{(1)}[U]_l=0$ or $f^{(d)}[U]_l=0$.
  \item If $l_1 > s_1^{(1)}$ then $f^{(1)}[U]_l=0$.
 \end{enumerate}

\end{remark}

\begin{theorem}\label{inferencia principal}
In the setting of General Assumption~\ref{Situac general} we have:

\begin{enumerate}
 \item If $l=(1,t-1)$ then there exist $f\in F_l$ such that $LP(f)\preceq l$ for which $f[U]_l=0$, except in the following cases:
 \begin{enumerate}
   \item $d=2$, $s_1^{(1)}=1$, $s_2^{(2)}=t$
    and the implementation is doing under the lexicographic ordering ($X_1>X_2$).
   
  \item $d=2$, $s^{(1)}=(2,0)$, $2 \leq t= 2s_2^{(2)}$.
  
  If the implementation is doing under the lexicographic ordering then $l\neq (1,1)$.
   
   \item $d=3$, $s^{(1)}=(2,0)$, $t= s_2^{(2)}+s_2^{(3)}$.
 \end{enumerate}
  \item If $l=(t-1,1)$ and $t>2$ then there exist $f\in F_l$ such that $LP(f)\preceq l$ for which $f[U]_l=0$, except in the following cases:
  \begin{enumerate}
  \item $d=2$, $s_1^{(1)}=t$, $s_2^{(2)}=1$
   and the implementation is doing under the inverse graded ordering ($X_2>X_1$).  
  \item $d=2$, $s^{(2)}=(0,2)$, $t= 2s_1^{(1)}$.
  
  \item $d=3$, $s^{(3)}=(0,2)$, $t=s_1^{(1)}+s_1^{(2)}$.
 \end{enumerate}
\end{enumerate} 
\end{theorem}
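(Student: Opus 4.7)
The plan is to mirror the trichotomy from Theorem~\ref{Caso l1 y l2 mayores que 1}—namely, the split on $l_1$ versus $s_1^{(1)}$ combined with the split on $l_2$ versus $s_2^{(d)}$—but now for the boundary pairs $l=(1,t-1)$ and $l=(t-1,1)$ that were excluded there by the hypothesis $l_1,l_2>1$. For each subcase I would list the candidate polynomials $f^{(i)}\in F_l$ with $LP(f^{(i)})\preceq l$, assume for contradiction that $f^{(i)}[U]_l\neq 0$, and use the forced inclusion $l-s^{(i)}\in\Delta(u^{l+1})$ together with Lemma~\ref{area rectangulo Delta} to bound $|\Delta(u^{l+1})|$. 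Lemma~\ref{d no pasa de 4} and Lemma~\ref{primer paso no cero} still restrict the analysis to $2\le d\le 3$. The contradiction $|\Delta(u^{l+1})|>t$ closes each case except when the configuration matches one of the three listed exceptions.

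Concretely, for part~(1) with $l=(1,t-1)$, the case $l_1<s_1^{(1)}$ pins $s_1^{(1)}=2$, and the subanalyses $d=2$ (where $\Delta(u^l)$ is the single rectangle $\Delta_{(1,s_2^{(2)}-1)}$ of area $2s_2^{(2)}$) and $d=3$ (where $\Delta(u^l)=\Delta_{(1,s_2^{(2)}-1)}\cup\Delta_{(0,s_2^{(3)}-1)}$) yield the tight equalities $t=2s_2^{(2)}$ and $t=s_2^{(2)}+s_2^{(3)}$ that characterise exceptions~(1b) and~(1c); in both, the points $l-s^{(i)}$ forced by a nonzero $f^{(i)}[U]_l$ happen to lie already inside $\Delta(u^l)$, so no contradiction emerges. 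The case $l_1=s_1^{(1)}=1$ with $d=2$ admits the degenerate configuration $s^{(2)}=(0,t)$, where $f^{(1)}$ is the only candidate and the footprint already saturates at $t$ points on the $X_2$-axis; the monomial ordering enters here because under graded order an auxiliary argument involving $f^{(2)}$ and the as-yet-unprocessed entry $u_{(0,t)}$ disposes of the configuration, whereas under lex no such argument is available—producing exception~(1a).

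Part~(2) for $l=(t-1,1)$ follows by swapping the two coordinates throughout; the only asymmetry is the monomial ordering itself, which is why exception~(2a) requires the inverse graded order rather than the lexicographic order of~(1a), and the hypothesis $t>2$ removes the caveat analogous to the one in~(1b) concerning $l=(1,1)$. The main obstacle throughout is the bookkeeping of which forced points $l-s^{(i)}$ are genuinely new contributions to $\Delta(u^{l+1})$ versus already members of $\Delta(u^l)$: the three exceptional configurations in each part are precisely those in which every forced point falls inside the existing footprint, allowing $|\Delta(u^{l+1})|\le t$ to persist even when $f^{(i)}[U]_l\neq 0$.
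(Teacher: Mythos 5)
Your plan follows essentially the same route as the paper: reduce to $2\le d\le 3$ via Lemma~\ref{d no pasa de 4}, re-run the $l_1$ versus $s_1^{(1)}$ / $l_2$ versus $s_2^{(d)}$ case split of Theorem~\ref{Caso l1 y l2 mayores que 1} at the boundary pair $l=(1,t-1)$, and observe that a contradiction $|\Delta(u^{l+1})|>t$ fails exactly when the forced points $l-s^{(i)}$ already sit inside $\Delta(u^l)$ --- which pins down $s_1^{(1)}=2$ with $t=2s_2^{(2)}$ ($d=2$) or $t=s_2^{(2)}+s_2^{(3)}$ ($d=3$), and the degenerate $s^{(1)}=(1,0)$, $s^{(2)}=(0,t)$ configuration. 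Your footprint computations for the $d=2$ and $d=3$ exceptional configurations are correct, and your explanation of why $t>2$ in part~(2) plays the role of the $l\neq(1,1)$ caveat in~(1b) matches the paper.

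The one genuine gap is the ordering-sensitivity of exceptions~(1a), (2a) and of~(1b) at $l=(1,1)$. You write that under the graded order ``an auxiliary argument involving $f^{(2)}$ and the as-yet-unprocessed entry $u_{(0,t)}$ disposes of the configuration,'' but you never give that argument, and the hint you give is not the mechanism that actually works. What the paper does is argue about \emph{reachability of the configuration}, not about the entry $u_{(0,t)}$: if $s^{(1)}=(1,0)$, $s^{(2)}=(0,t)$ at step $l=(1,t-1)$, then by Remark~\ref{Conjunto de los G} there must exist $g^{(1)}\in G_l$ and a pair $l'<_T l$ with $l'-LP(g^{(1)})$ equal to the corner point $(0,t-1)$; under the inverse graded ordering one checks (splitting on whether $LP(g^{(1)})_2\ge 1$ or $=0$, and using Lemma~\ref{primer paso no cero}) that any such $l'$ satisfies $l<_T l'$, a contradiction, so the configuration never arises. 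A dual corner-point argument handles~(2a) under lex, and for~(1b) at $l=(1,1)$ one shows $F_{(1,1)}$ must be $\{X_1+b_1,X_2+b_2\}$ under lex, so $l_1\not< s_1^{(1)}$. Without some argument of this kind you cannot justify restricting exception~(1a) to the lexicographic ordering (and symmetrically (2a) to the graded one); you would only be entitled to state the exceptions unconditionally, which is a strictly weaker theorem.
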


\begin{proof} Again, by Lemma~\ref{d no pasa de 4} and since $\S(t)$ satisfies the $l$-condition or the $g$-condition, we may assume that $2\leq d\leq 3$.
  Set $\Delta=\Delta(u^{l+1})$. We shall proceed by considering cases. Note that  $l_1,l_2\neq 0$ implies $t\geq 2$; besides, it always happen that $s_1^{(1)}+s_2^{(d)}-1\leq |\Delta|$.
  
  We shall only proof the theorem for $l=(1,t-1)$. The proof for the other pair is analogous. We proceed by analyzing the first two cases considered in the proof of Theorem~\ref{Caso l1 y l2 mayores que 1}.\\
 
 \textbf{CASE 1.} Suppose that $1=l_1< s_1^{(1)}$. Then we have that $t-1\geq s_2^{(d)}$ because $s_1^{(1)}+s_2^{(d)}\leq t+1$. In this case, $f^{(1)}[U]_l=0$ by convention in Definition~\ref{recurrencia lineal}. 
 
 
\textbf{Case 1(A).} If $s_2^{(d)}=t-1=l_2$ then as $s^{(1)}_1+t-1 \leq t+1$ then  $s_1^{(1)}\leq 2$, so that $s^{(1)}_1=2$, $2\leq d\leq 3$ and $s_1^{(1)}+s_2^{(d)} = t+1$; that is, there are already $t$ points of $\Delta(u^l)$ on the axes.
 
\textbf{Case 1(A)(I).} If $d=2$ then, having in mind that $\Delta(u^l)$ must be a rectangle, we have that $s_2^{(2)}=1=l_2=t-1$, because we already have $t$ points on the axes and $s_1^{(1)}= 2$ (note that $s^{(1)}_1s^{(2)}_2<s^{(1)}_1+s^{(2)}_2$); so that $t=2$ and $l=(1,1)$. If $f^{(2)}[U]_{(1,1)}\neq 0$ then, as $(1,1)-s^{(2)}=(1,0)\in \Delta(u^l)$ there are no contribution with a new point to $\Delta$.

Later, we shall see that the values $s^{(2)}=(0,1)$, $l=(1,1)$ and $d=2$ cannot be reached when one is implementing the BMSa under the lexicographic ordering.

In case we are implementing the BMSa under the graded ordering we will see that there are not any problem so \textbf{we cannot guarantee that} $f^{(2)}[U]_{(1,1)}=0$. This case is \textbf{Item \textit{1(b)} of this theorem} for $t=2$.

\textbf{Case 1(A)(II).} Suppose that $d=3$; so that $s^{(2)}=(1,1)$ must be a defining point (otherwise $|\Delta(u^l)|>t$) and then $s^{(2)}\preceq l$. Even $f^{(2)}[U]_l\neq 0$, as $l-s^{(2)}=(0,t-2)\in\Delta(u^l)$, because $s^{(3)}=(0,t-1)$, there are no any problem or contradiction. Also, if $f^{(3)}[U]_l\neq 0$ then $l-s^{(3)}=(1,0)\in \Delta(u^l)$.

So, for  $s^{(1)}=(2,0)$,  $s^{(2)}=(1,1)$, $s^{(3)}=(0,t-1)$ and $l=(1,t-1)$ under any of the orders considered \textbf{we cannot guarantee that} $f^{(i)}[U]_l=0$, for any of $i=2,3$. This case will be in another one that we will see later to form the \textbf{Item \textit{1(c)} of this theorem}.
 
\textbf{Case 1(B).} Suppose that $s_2^{(d)}<l_2=t-1$ (Note that $t>2$). If $f^{(d)}[U]_l\neq 0$ then $l-s^{(d)}=(1,t-1-s_2^{(d)})\in\Delta$ and it does not belong to any of the axes. If, moreover, $t-1\geq 2s_2^{(d)}$ then the point $s^{(d)}$ also belong to $\Delta$ so we have that $t\geq |\Delta|\geq s_1^{(1)}+s_2^{(d)}+t-1-s_2^{(d)}=t+s^{(1)}_1-1>t$, because $s^{(1)}_1\geq 2$, which is impossible. So, it must happen that $t-1< 2s_2^{(d)}$ and so, as it may be $s^{(d)}\not\in\Delta$ then $t\geq |\Delta|\geq t+s^{(1)}_1-2$, which forces $s^{(1)}_1=2$.

\textbf{Case 1(B)(I).} If $d=2$, we have that $t-1<2s_2^{(2)}\leq t$ (since $s_1^{(1)}>1$, then $2s^{(2)}=|\Delta(u^l)|\leq t$), from which the equality holds, and $s^{(1)}_1=2$ by the paragraph above. Then $l-s^{(2)}=(1,\frac{t}{2}-1)\in\Delta(u^l)$ and so, \textbf{we cannot guarantee that} $f^{(2)}[U]_l=0$.  This is \textbf{Item \textit{1(b)} of this theorem for $t>2$}.
 
\textbf{Case 1(B)(II).} Let $d=3$. If $l_2\geq s_2^{(2)}+s_2^{(3)}$ and $f^{(i)}[U]_l\neq 0$ for $i=2,3$ (note that $s^{(i)}\preceq l$) again, the point $s^{(3)}$ becomes an element of $\Delta$, so, as $(1,t-1-s^{(3)}_2),(0,s^{(3)}_2)\in \Delta$, we have that $|\Delta|\geq s^{(3)}_2+1+t-1-s^{(3)}_2+1=t+1$, which is impossible; so that $f^{(i)}[U]_l= 0$ for some $i\in\{2,3\}$.

Otherwise, suppose that  $l_2< s_2^{(2)}+s_2^{(3)}$ then we do not have to add more points to $\Delta$, so  $|\Delta|\leq t$. Hence, \textbf{we cannot guarantee that} $f^{(i)}[U]_l=0$, for any of $i=2,3$.

We note that this case contains that considered in \textbf{Case 1(A)(II)} (except for the fact that, before, $s^{(2)}=(1,1)$) to form the \textbf{Item \textit{1(c)} of this theorem}, including $l_2=s_2^{(3)}$. In this case, we have that $t-1< s_2^{(2)}+s_2^{(3)}\leq t$, from which the equality holds.\\
 
 \textbf{CASE 2.} Suppose that $l_1= s_1^{(1)}$; so that, $s^{(1)}_1=1$ and so $d=2$.
 
\textbf{Case 2(A).} If $s_2^{(2)}=t$ then $|\Delta(u^l)|=t$ and $f^{(2)}[U]_l= 0$ by convention in Definition~\ref{recurrencia lineal}. If $f^{(1)}[U]_l\neq 0$ we have that $l-s^{(1)}\in\Delta(u^l)$ and, as this does not imply any increasing of $|\Delta|$. As above, later, we shall see that the values $s^{(1)}=(1,0)$ (so $d=2$), $s^{(2)}=(0,t)$ and $l=(1,t-1)$ cannot be reached when one is implementing the BMSa under the graded ordering.

So, \textbf{we cannot guarantee that} $f^{(1)}[U]_l=0$, under the lexicographic ordering. This is \textbf{Item \textit{1(a)} of this theorem}. 

\textbf{Case 2(B).} Suppose that $s_2^{(2)} \leq t-1$ and $f^{(i)}[U]_l\neq 0$ for $i=1,2$. Then $l-s^{(1)}=(0,t-1)\in\Delta$ and so, it has already $t$ points by counting only the points in the axes; however, $l-s^{(2)}=(1,l_2-s_2^{(d)})\in \Delta$, and it is not a point in any of the axes, giving us a contradiction. Hence $f^{(1)}[U]_l= 0$ or $f^{(2)}[U]_l= 0$.\\

We begin by seeing that the values of the parameters involved in  \textit{Statement 1(a) of this theorem} cannot be reached when the implementation of the BMSa is doing under the inverse graded ordering. Let $s^{(1)}=(1,0)$ (so $d=2$), $s^{(2)}=(0,t)$ and $l=(1,t-1)$. By the description of the BMSa \cite[Algorithm 1, p. 272]{BS2}, we know that there is the polynomial $g^{(1)}$ and a pair $l'<_Tl$ such that $l'-LP(g^{(1)})=(0,t-1)$ which is a corner point (Remark~\ref{Conjunto de los G}). This means that $l'_2-LP(g^{(1)})_2=t-1$. If $LP(g^{(1)})_2\geq 1$, by Lemma~\ref{primer paso no cero} we have $(0,t)\preceq l'$, which is impossible because it is the largest point on the diagonal, so that $l<_T l'$. If $LP(g^{(1)})_2=0$ then $LP(g^{(1)})_1 \geq 1$; so that $l'_1\geq 1$ and $l'_2=t-1$ contradicting the fact that $l'<_T l$.  

Now we are going to see that the values of the parameters involved in  \textit{Statement 1(b) of this theorem, with $l=(1,1)$} cannot be reached when the implementation of the BMSa is doing under the lexicographic ordering. As $t=2$ and so $s_2^{(d)}=1$ then $l=(1,1)$. One may check that it must be $F_{(1,1)}=\{X_1+b_1,X_2+b_2\}$, for some $b_1,b_2\in \L$; so that $l_1\nless s_1^{(1)}$. (We note that in case we are implementing the BMSa under the graded ordering, it must be $(2,0)<_T (1,1)=l$, and this does not get any contradiction.)
 
As we commented at the beginning of this proof, the argument for $l=(t-1,1)$ is analogous.
\end{proof}

\begin{remark}
\begin{enumerate}
 \item By the proof of the theorem above, one may see that, when the exception conditions do not hold, it is not possible to specify which element or elements $f\in F_l$ satisfying $LP(f)\preceq l$ could give us a linear recurring relation. We only know that there will be at most two elements of $F_l$.
 
 \item In the proof of the theorem above, we use the phrase ``we cannot guarantee that...''; however, in all cases, we have examples of tables (in which we know all values) where, there are no recurring relations at ``exception steps''. One of them is the following example.
\end{enumerate} 
\end{remark}

\begin{example}\label{esquivando el valor perdido con el orden}

Let us go back to Example~\ref{ejemplo planteamiento}; so that
   $q=2$, $r_1=r_2=15$, $\L=\F_{2^4}$, with primitive root $a\in \L$, and $t=3$. Our table is
  \[\left(\begin{array}{lllllllllllllll}
   1 & a^6 & a^{12} & a^{9} & a^{9}& 0\\
   
   a^{12} & a^{3} & u_{(1,2) } \\
   
   a^{9} & a^{8}  \\
   
   a^{7} \\
   
   a^{3}\\
   a^5
  \end{array}\right)\]
  
Then, if we implement the BMSa under the lexicographic ordering we have (summarizing)
\begin{footnotesize}
 \[\begin{array}{|l|l|l|l|}\hline
 l&F\subset \bL(u^{l+1})&G&\Delta(u^{l+1})\\ \hline
  \text{Initiation}&\{1\}&\emptyset&\emptyset\\ \hline
(0,0)\rightarrow&\{X_1,X_2\}&\{1\}&\{(0,0)\}\\ \hline
(0,1)\rightarrow&\{X_1,X_2+a^{6}\}&\{1\}&\{(0,0)\}\\ \hline
 (0,2)\rightarrow&\text{The same}&\text{The same}&\text{The same}\\ \hline
 (0,3)\rightarrow&\{X_1,X_2^3+a^{6}X_2^2+a\}&\{X_2+a^6\}&\begin{array}{l}
                   \{(0,0),\\(0,1),\\
                    (0,2)\}
                   \end{array}\\ \hline
  (0,4)\rightarrow&\text{The same}&\text{The same}&\text{The same}\\ \hline
   (0,5)\rightarrow&\begin{array}{l}
                    \{X_1,\\
                    X_2^3+a^{6}X_2^2+a^5X_2+a^6\}
                   \end{array}&\{X_2+a^6\}&
                   \begin{array}{l}
                   \{(0,0),\\(0,1),\\
                    (0,2)\}
                   \end{array}
\\ \hline
 (1,0)\rightarrow& \begin{array}{l}
                    \{X_1+aX_2+a^2,\\
                    X_2^3+a^{6}X_2^2+a^5X_2+a^6\}
                   \end{array}&\{X_2+a^6\}&
                   \begin{array}{l}
                   \{(0,0),\\(0,1),\\
                    (0,2)\}
                   \end{array}\\ \hline
(1,1)\rightarrow&\text{The same}&\text{The same}&\text{The same}\\ \hline
\end{array}\]
\end{footnotesize}

So, at the step $l=(1,2)$ we are in the situation of Theorem~\ref{inferencia principal}.\textit{1(a)}. Then, we try with the reverse graded ordering. (In fact, we know that $u_{(1,2)}=0$ so that $f[U]_{(1,2)}=a^{11}$.)

\begin{footnotesize}
 \[\begin{array}{|l|l|l|l|l|}\hline
 l&F\subset \bL(u^{l+1})&G&\Delta(u^{l+1})\\ \hline
  \text{Initiation}&\{1\}&\emptyset&\emptyset\\ \hline
(0,0)\rightarrow&\{X_1,X_2\}&\{1\}&\{(0,0)\}\\ \hline
(1,0)\rightarrow&\{X_1+a^{12},X_2\}&\{1\}&\{(0,0)\}\\ \hline
 (0,1)\rightarrow&\{X_1+a^{12},X_2+a^6\}&\{1\}&\{(0,0)\}\\ \hline
 \begin{array}{l}
  (2,0)\\
  (1,1)\\
  (0,2)
 \end{array}
\rightarrow&\text{The same}&\text{The same}&\text{The same}\\ \hline
(3,0)\rightarrow&\begin{array}{l}
                    \{X_1^3+a^{12}X_1^2+a^{10},\\
                    X_2+a^6\}
                   \end{array}&\{X_1+a^{12}\}&
                   \begin{array}{l}
                   \{(0,0),\\(1,0),\\
                    (2,0)\}
                   \end{array}\\ \hline
 (2,1)\rightarrow& \begin{array}{l}
                    \{X_1^3+a^{12}X_1^2+a^{10},\\
                    X_2+a^7X_1+a^{12}\}
                   \end{array}&\{X_1+a^{12}\}&
                    \begin{array}{l}
                   \{(0,0),\\(1,0),\\
                    (2,0)\}
                   \end{array}\\ \hline
\end{array}\]
\end{footnotesize}

Now, at step $l=(1,2)$, Theorem~\ref{inferencia principal} tell us that $f^{(2)}[U]_{(1,2)}= 0$. Then $0=u_{(1,2)}+a^7u_{(2,1)}+a^{12}u_{(1,1)}=u_{(1,2)}$ and we get the desired value.
\end{example}

Of course, changing orderings in the implementation of the BMSa does not always work. The proposition below tell us how we may proceed to find the minimal Groebner basis for $\bL(U)$, when such change fails. 

We recall that the BMSa contemplates two alternatives in order to apply the Berlekamp procedure; namely Procedure 1 and Procedure 2 (see \cite[Paragraph III.B, pp. 271-272]{BS2} or \cite[Theorem 1 and Theorem 2]{Sakata}. 

In Proposition~\ref{salvando los muebles para los casos del Teorema con lex}, Proposition~\ref{salvando los muebles para los casos del Teorema con Grad} and Proposition~\ref{salvando los muebles en los ejes} we mention some unknown values (denoted by $b$ and $c$). We refer the reader to Section~\ref{final procedure}where we explain how to face the computation of these values.

\begin{proposition}\label{salvando los muebles para los casos del Teorema con lex}
Take the setting of General Assumption~\ref{Situac general}, with $F_l=\{f^{(1)},\dots,f^{(d)}\}$, and let $g^{(i)}\in G_l$, as in Remark~\ref{Conjunto de los G} such that $g^{(i)}[u^{k_i}]_{k_i}=v_i\neq 0$, with $k_i<_Tl$, for $i=1,\dots,d-1$. By Remark~\ref{fuera de rango es rec lin}, we may suppose that $\supp(f^{(i)})\setminus LP(f^{(i)})\in \Delta(u^l)$.  Suppose that we are implementing the BMSa under the lexicographic ordering ($X_1>X_2$).

\begin{enumerate}
 
 \item Consider $l=(1,t-1)$. 

 \begin{enumerate}
 
 \item Suppose we have that $d=2$, $s^{(1)}=(1,0)$ and $s^{(2)}=(0,t)$.  Then there is a value $b\in\L$ such that the polynomial 
 $$h_b=f^{(1)}-\frac{b}{v}g^{(1)}$$
 verifies that $\G_b=\{h_b, f^{(2)}\}$ is a Groebner basis for $\bL(U)$.

\item Suppose that following the BMSa we have that $d=2$, $s^{(1)}=(2,0)$, $t= 2s_2^{(2)}$. Then there is a value $b\in\L$ such that the polynomial 
$$h_b=f^{(2)}-\frac{b}{v}g^{(1)}$$
verifies that $\G_b=\{\overline f^{(1)}, h_b\}$ is a Groebner basis for $\bL(U)$, where $\overline f^{(1)}$ is the final updating of $f^{(1)}$ through the steps $(2,0),\dots,(2,\frac{t}{2}-1),(3,0),\dots,(3,\frac{t}{2}-1)$.

\item Suppose that following the BMSa we have that $d=3$, $s^{(1)}=(2,0)$, $t= s_2^{(2)}+s_2^{(3)}$.

There exists a list of values $c,b,b_0,\dots,b_{s_2^{(2)}-1}$ such that, setting $\mathbf{b}=(b,b_0,\dots,b_{s_2^{(2)}-1})$,
\[h_{\mathbf b}=f^{(2)}-\frac{b}{v_2}g^{(2)}-\sum_{n=0}^{s^{(2)}_2-1} \frac{b_n}{v_1}X_2^{s^{(2)}_2-(n+1)}g^{(1)} \]
and 
\[h_c=f^{(3)}-\frac{c}{v_1}g^{(1)}\]
verify that $\G_{(\mathbf b,c)}=\{\overline f^{(1)},\; h_{\mathbf b}, h_c\}$ is a Groebner basis for $\bL(U)$, where $\overline f^{(1)}$ is the final updating of $f^{(1)}$ through $k\in \{(2,0),\dots,(2,s^{(3)}_2-1),(3,0),\dots,(3,s^{(2)}_2-1)\}$.
 \end{enumerate}
 
 \item Consider $l=(t-1,1)$, with $t>2$.
 
 \begin{enumerate}
 
  \item Suppose that following the BMSa we have that $d=2$, $s^{(2)}=(0,2)$, $t= 2s_1^{(1)}$. Then there is a value $b\in\L$ such that the polynomial $$h_b=f^{(1)}-\frac{b}{v}g^{(1)}$$ verifies that $\G_b=\{h_b,f^{(2)}\}$ is a Groebner basis for $\bL(U)$.

\item Suppose that following the BMSa 
we have that $d=3$, $s^{(3)}=(0,2)$, $t=s_1^{(1)}+s_1^{(2)}$. Then 
there is a list of values $c,b,b_0,\dots,b_{s^{(1)}_1-(s^{(2)}_1+1)}$ in $\L$, such that, setting $\mathbf{b}=(b,b_0,\dots,b_{s^{(1)}_1-(s^{(2)}_1+1)})$, the polynomials 
\begin{eqnarray*}
 h_c&=&f^{(2)}-\frac{c}{v_1}g^{(1)}\\
 h_{\mathbf b}&=&f^{(1)}-\frac{b}{v_2}g^{(2)}-\\
 &-&\left(\sum_{n=0}^{s^{(1)}_1-(s^{(2)}_1+1)}b_n X_1^{s^{(1)}_1-(n+s^{(2)}_1+1)}\right)\frac{1}{v_1}g^{(1)}
\end{eqnarray*}
verify that $\G_{(\mathbf b,c)}=\{h_{\mathbf b}, h_c, f^{(3)}\}$ is a Groebner basis for $\bL(U)$.

 \end{enumerate}
 \end{enumerate}
\end{proposition}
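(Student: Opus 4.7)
The plan is to show that in each case the parametric family $\G$ described in the statement captures the output that the BMSa would have produced had the unknown value $u_l$ been available; since by Theorem~\ref{condiciones suficientes} that output is a Groebner basis for $\bL(U)$, the existence of suitable parameter values then yields the claim. The argument proceeds in three uniform steps across the six cases.

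The first step is to verify that every element of $\G$ which is \emph{not} parametrically modified already lies in $\bL(U)$ once BMSa has reached (or, where appropriate, advanced past) step $l$. In Case~1(a) this concerns $f^{(2)}$: since $LP(f^{(2)})=(0,t)\not\preceq l=(1,t-1)$, Definition~\ref{recurrencia lineal} forces $f^{(2)}[U]_l=0$ by convention, so $f^{(2)}$ survives step $l$ untouched and, by Theorem~\ref{condiciones suficientes}, survives the remaining steps of $\S(t)$. In Cases~1(b) and 1(c) the polynomial $\overline{f^{(1)}}$ is the final BMSa update of $f^{(1)}$ through a list of subsequent steps whose indices correspond to \emph{known} syndrome values, hence it lies in $\bL(U)$ by the correctness of BMSa applied at those indices. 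The analogous facts hold for Case~2.

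The second step identifies the parameters with the Berlekamp update coefficients of \cite[Proposition~15]{BS2}. In Case~1(a), had $u_l$ been available, BMSa would apply Procedure~1 (because $l-LP(f^{(1)})=(0,t-1)\in\Delta(u^l)$) and replace $f^{(1)}$ by $f^{(1)}-\frac{f^{(1)}[U]_l}{v}g^{(1)}$, which coincides with $h_b$ under the identification $b:=f^{(1)}[U]_l\in\L$; no further update arises at subsequent steps in $\S(t)$, since $\Delta(u^l)$ already consists of the $t$ points on the $X_2$-axis. In Cases~1(b) and 2(a) one additional scalar absorbs the single subsequent Berlekamp update forced by BMSa. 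In Cases~1(c) and 2(b) the components of $\mathbf b$ together with $c$ correspond one-to-one with the update coefficients produced at the subsequent steps of $\S(t)$; the shift factor $X_2^{s_2^{(2)}-(n+1)}$ (respectively $X_1^{s_1^{(1)}-(n+s_1^{(2)}+1)}$) multiplying $g^{(1)}$ in the $n$-th summand reflects the monomial translation required to reuse the same reservoir polynomial $g^{(1)}$ at successive heights while keeping $LP(h_{\mathbf b})=LP(f^{(2)})$ (resp.\ $LP(h_{\mathbf b})=LP(f^{(1)})$).

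The third step verifies the Groebner basis property. Because each $g^{(j)}$ satisfies $LP(g^{(j)})<_T LP(f^{(i)})$ and each shifted monomial in Cases~1(c) and 2(b) stays strictly below the corresponding $LP(f^{(i)})$, the defining points of $\G$ coincide with those of the minimal set of polynomials that BMSa would output at the last index of $\S(t)$; by Remarks~\ref{Conjunto de los G} and \ref{fuera de rango es rec lin} and Theorem~\ref{condiciones suficientes} this implies that $\G$ is a reduced Groebner basis of $\bL(U)$. The main obstacle is Step~2 in Cases~1(c) and 2(b): one must verify that the entire cascade of Berlekamp updates after step $l$ collapses into precisely the prescribed linear combination of shifts of a single reservoir polynomial $g^{(1)}$ (and one scalar multiple of $g^{(2)}$), with no additional corrective terms. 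This in turn requires tracking that throughout the subsequent steps the polynomials in $G$ themselves are not updated (Procedure~1 is consistently followed), so that $g^{(1)}$ remains the unique available polynomial for the relevant corner point throughout the cascade.
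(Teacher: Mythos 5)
Your overall strategy is the same as the paper's: treat the unknown quantity $f^{(i)}[U]_l$ as a parameter $b$, observe that the step-$l$ update must be Procedure~1 (since $l-LP(f^{(i)})\in\Delta(u^l)$), so the updated polynomial has the form $f^{(i)}-\frac{b}{v_j}g^{(j)}$, and then argue that every subsequent step of $\S(t)$ either produces no update or produces a Procedure~1 update that introduces one further parameter of the prescribed shape. That is exactly how the paper proceeds. However, your write-up stops precisely where the actual content of the proof begins, and in one place leans on the wrong result.

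First, the minor point: you justify that $f^{(2)}$ (in Case~1(a)) ``survives the remaining steps of $\S(t)$'' by Theorem~\ref{condiciones suficientes}. That theorem only says there are no updates \emph{outside} $\S(t)$; it says nothing about the remaining indices $k\in\S(t)$ with $l<_T k$. The correct argument, which the paper gives, is that under the lexicographic order every such $k$ has $k_2\leq t-2$, so $s^{(2)}=(0,t)\not\preceq k$ and $f^{(2)}[U]_k=0$ by the convention of Definition~\ref{recurrencia lineal}; for $h_b$ one instead uses $k_1\geq 2$, hence $k-LP(h_b)\notin\Delta$, which together with $|\Delta|=t$ forces $h_b[U]_k=0$. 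The more serious point is that you explicitly flag the cascade in Cases~1(c) and 2(b) as ``the main obstacle'' and then do not resolve it. Resolving it is the bulk of the proof: one must determine the exact finite list of indices at which further updates can occur (e.g.\ $k_a=(2,s_2^{(2)}+a)$ for $a=0,\dots,s_2^{(2)}-1$ in Case~1(c)), check that at each such $k_a$ one has $k_a-LP(h_{\mathbf b_a})\in\Delta$ so that Procedure~1 (not Procedure~2) applies and $G$ is never enlarged, and compute the shift $\mathbf e=(0,s_2^{(2)}-(a+2))$ from the corner point attached to $g^{(1)}$ so that the $a$-th correction term is $\frac{b_{a+1}}{v_1}X_2^{s_2^{(2)}-(a+2)}g^{(1)}$ and the leading power product is preserved. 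Without this induction the claimed closed form of $h_{\mathbf b}$, and hence the parametric family $\G_{(\mathbf b,c)}$, is not established. A related unchecked point: you assert that the updates producing $\overline f^{(1)}$ occur ``at indices corresponding to known syndrome values,'' but the quantities actually evaluated are $u_{k-s^{(1)}+m}$ for $m\in\supp(f^{(1)})$, so one must also confirm that none of these shifted indices equals the missing index $l$; this holds in each case because the relevant $m$ lie in $\Delta(u^l)$, but it needs to be said.
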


\begin{proof}

We will only see the case $l=(1,t-1)$ as the proof the other one is analogous. By hypothesis we may denote $\Delta=\Delta(u^{l+1}) =\Delta(u^l)$.

  \textbf{\textit{1(a).}}  We set $f^{(1)}[U]_l=b$, where it may be, even $b=0$. For $b\neq 0$, by hypothesis, we have that $l-s^{(1)}=(0,t-1)\in \Delta$ and hence it applies Procedure~1 of the BMSa. It is easy to check that, following the notation of Procedure~1 in \cite[Paragraph III.B, pp. 271-272]{BS2}, $i=1$, $j=1$; so that $\mathbf{e}=(0,0)$and  $h_{f^{(1)},g^{(1)}}=f^{(1)}-\frac{b}{v}g^{(1)}=h_b$. If one has $b=0$ then $h_b=f^{(1)}$. Now let us study possible updates for $F_l$. By Theorem~\ref{condiciones suficientes}, independently of any chosen value $b\in\L$, for all $k=(k_1,k_2)\in\S(t)$, such that $l<_{lex}k$, then $k_1\geq 2$; so that, $k-LP(h_b)\not\in\Delta$. As  $|\Delta|=t$, it must be $h_b[U]_k=0$. Then, we do not need to make more tests for $h_b$.

  Now we would have to continue implementing the BMSa to $f^{(2)}$ for all $l<_Tk=(k_1,k_2)\in\S(t)$. By definition of $\S(t)$ it must be $k_2\leq t-2$ (eventually $k_2=0$ is possible), and then $ s^{(2)}\not\preceq k$, so that $f^{(2)}[U]_k = 0$, by convention in Definition~\ref{recurrencia lineal}. So, again, we do not need to make more tests for $f^{(2)}$ and hence $\G_b=\{h_b,f^{(2)}\}$ is a Groebner basis for $\bL(U)$.
  
  \textbf{\textit{1(b).}} We follow Procedure~1 with $i=2$ and $j=1$. Again, $\mathbf{e}=(0,0)$, so that  $h_{f^{(2)},g^{(1)}}=f^{(2)}-\frac{b}{v}g^{(1)}=h_b$ and if one has $b=0$ then $h_b=f^{(2)}$. Now let us study possible updates for $F_{l+1}$. 
  
  As above, by Theorem~\ref{condiciones suficientes}, independently of any chosen value $b\in\L$, for all $k=(k_1,k_2)\in\S(t)$, such that $l<_{lex}k$, then $k_1\geq 2$; so that, $k-LP(h_b)\not\in\Delta$. As  $|\Delta|=t$, it must be $h_b[U]_k=0$. Then, we do not need to make more tests for $h_b$. Now, clearly $k-s^{(1)}\in\Delta$ only if $k=(2,0),\dots,(2,\frac{t}{2}-1),(3,0),\dots,(3,\frac{t}{2}-1)$ and in such places it may be updates, all of them under Procedure~1 of the BMSa; otherwise $k-s^{(1)}\not\in\Delta$ and, as $|\Delta|=t$, it must be $f^{(1)}[U]_k=0$.
  
 \textbf{\textit{1(c).}} We set $f^{(2)}[U]_l=b$ and $f^{(3)}[U]_l=c$, where it may be, even $b=c=0$. For $b,c\neq 0$, by hypothesis, we have that $l-s^{(i)}\in \Delta(u^l)$, for $i=2,3$ and hence, we apply in both cases Procedure~1 of the BMSa. If $b\neq 0$ then following, again, the notation in \cite[Paragraph III.B, pp. 271-272]{BS2} one set $i=2$, $j=2$ then $\mathbf{e}=(0,0)$ and  $h_{f^{(2)},g^{(2)}}=f^{(2)}-\frac{b}{v_2}g^{(2)}=h_b$. If one has $c\neq 0$ then, again under the notation mentioned previously, we set $i=3$, $j=1$ then $\mathbf{e}=(0,0)$ and  $h_{f^{(3)},g^{(1)}}=f^{(3)}-\frac{c}{v_1}g^{(1)}=h_c$. If one has $b=0$ or $c=0$ then $h_b=f^{(2)}$ or $h_c=f^{(3)}$. 
  
  We consider possible updates. Let $l<_T k\in \S(t)$.  First, setting $k=(k_1,k_2)$, since $k_1\geq 2$ we have that, $s^{(3)}\preceq k$ implies $k-s^{(3)}\not\in \Delta $; so that it must happen that $h_c[U]_k=0$ and we do not have to compute such updates. For $k\in \{(2,0),\dots,(2,s^{(3)}_2-1),(3,0),\dots,(3,s^{(2)}_2-1)\}$ it may be updates for $f^{(1)}$, obtaining $\overline f^{(1)}$. 

Consider now $h_b=f^{(2)}-\frac{b}{v_2}g^{(2)}$, constructed above. For those  pairs $k\in \{(2,s^{(2)}_2),\dots,(2,2s^{(2)}_2-1)\}$ it may be updates for $h_b$. Let us study them. For $a=0,\dots,s^{(2)}_2-1$, set $k_a=(2,s^{(2)}_2+a)$.

For $a=0$ and $h_b[U]_{k_0}=b_0$, since $k_0-LP(h_b)\in\Delta$ we have to apply Procedure~1 of the BMSa. In this case, one may check that, under the notation in \cite[Paragraph III.B, pp. 271-272]{BS2} one set $i=2$, $j=1$ then $\mathbf{e}=(0,s^{(2)}_2-1)$ and  $h_{f^{(2)},g^{(1)}}=h_b-\frac{b_0}{v_1}X_2^{s^{(2)}_2-1}g^{(1)}= f^{(2)}-\frac{b}{v_2}g^{(2)}-\frac{b_0}{v_1}X_2^{s^{(2)}_2-1}g^{(1)}=h_{\mathbf b_0}$, where $\mathbf b_0=(b,b_0)$.

Suppose we have constructed, for $0\leq a< s^{(2)}_2-1$ the vector $\mathbf b_{a}=(b,b_0,\dots,b_a)$ and 
\[h_{\mathbf b_a}=f^{(2)}-\frac{b}{v_2}g^{(2)}-\sum_{n=0}^a \frac{b_n}{v_1}X_2^{s^{(2)}_2-(n+1)}g^{(1)} \]
with $LP(h_{\mathbf b_a})=s^{(2)}$, and let $k_{a+1}=(2,s^{(2)}_2+a+1)$. Set $h_{\mathbf b_a}[U]_{k_{a+1}}=b_{a+1}$. If $b_{a+1}\neq 0$ then as $k_{a+1}-LP(h_{\mathbf b_a})\in \Delta$ we have to apply Procedure~1 of the BMSa with $i=2$ and $j=1$; then $\mathbf{e}=(0,s^{(2)}_2-(a+2))$ and  $h_{f^{(2)},g^{(1)}}=h_{\mathbf b_a}-\frac{b_{a+1}}{v_1}X_2^{s^{(2)}_2-(a+2)}g^{(1)}=h_{\mathbf b_{a+1}}$. In case $b_{a+1}=0$ we get $h_{\mathbf b_a}=h_{\mathbf b_{a+1}}$.
 \end{proof}

\begin{proposition}\label{salvando los muebles para los casos del Teorema con Grad}
Take the setting of General Assumption~\ref{Situac general}, with $F_l=\{f^{(1)},\dots,f^{(d)}\}$, and let $g^{(i)}\in G_l$, as in Remark~\ref{Conjunto de los G}, such that $g^{(i)}[u^{k_i}]_{k_i}=v_i\neq 0$, with $k_i<_Tl$, for $i=1,\dots,d-1$. By Remark~\ref{fuera de rango es rec lin}, we may suppose that $\supp(f^{(i)})\setminus LP(f^{(i)})\in \Delta(u^l)$ for $i=1,\dots,d$. Suppose we are implementing the BMSa under the inverse graded ordering ($X_2>X_1$).

\begin{enumerate}
 
 \item Consider $l=(1,t-1)$. 

 \begin{enumerate}

\item Suppose that following the BMSa we have that $d=2$, $s^{(1)}=(2,0)$, $t= 2s_2^{(2)}$. Then there is a value $b\in\L$ such that the polynomial $$h_b=f^{(2)}-\frac{b}{v}g^{(1)}\in \G$$ verifies that:

\begin{enumerate}
 \item If $t=2$ and $\overline f^{(1)}= f^{(1)}-\frac{f^{(1)}[U]_{(3,0)}}{v_1}g^{(1)}$ is the updating of $f^{(1)}$ at the step $k=(3,0)$ of the BMSa then 
 $\G_b=\{\overline f^{(1)}, h_b\}$ is a Groebner basis for $\bL(U)$. 
 
 \item If $t\neq 2$ then $\G_b=\{f^{(1)}, h_b\}$ is a Groebner basis for $\bL(U)$.
\end{enumerate}

\item Suppose that following the BMSa we have that $d=3$, $s^{(1)}=(2,0)$, $t= s_2^{(2)}+s_2^{(3)}$. Then there is a list of values $b,c,c_0,\dots,c_{s^{(3)}_2-(s^{(2)}_2+1)}$ in $\L$, such that, setting $\mathbf{c}=(c,c_0,\dots,c_{s^{(3)}_2-(s^{(2)}_2+1)})$, the polynomials 
\begin{eqnarray*}
 h_b&=&f^{(2)}-\frac{b}{v_2}g^{(2)}\\
 h_{\mathbf c}&=&f^{(3)}-\frac{c}{v_1}g^{(1)}-\\
 &-&\left(\sum_{n=0}^{s^{(3)}_2-(s^{(2)}_2+1)}c_nX_2^{s^{(3)}_2-(n+s^{(2)}_2+1)}\right)\frac{1}{v_2}g^{(2)}
\end{eqnarray*}
verify that $\G_{(b,\mathbf c)}=\{f^{(1)}, h_b, h_{\mathbf c}\}$ is a Groebner basis for $\bL(U)$.

 \end{enumerate}
 
 \item Consider $l=(t-1,1)$, with $t>2$.
 
 \begin{enumerate}
 
 \item Suppose that  $d=2$, $s^{(1)}=(t,0)$ and $s^{(2)}=(0,1)$. Then there is a value $b\in\L$ such that the polynomial $$h_b=f^{(2)}-\frac{b}{v}g^{(1)}$$ 
 verifies the following:  if $\overline f^{(1)}$ is the updating of $f^{(1)}$ at the steps $k=(t,0),\dots,(2t-1,0)$ of the BMSa then 
 $\G_b=\{\overline f^{(1)}, h_b\}$ is a Groebner basis for $\bL(U)$. 
 
  \item Suppose that following the BMSa we have that $d=2$, $s^{(2)}=(0,2)$, $t= 2s_1^{(1)}$. Then there is a value $b\in\L$ such that the polynomial $$h_b=f^{(1)}-\frac{b}{v}g^{(1)}$$ verifies that:
  \begin{enumerate}
    
  \item If $t=2$ it is covered in Item~\textit{1(a)i}.
    \item If $t=4$ and $\overline f^{(2)}= f^{(2)}-\frac{f^{(2)}[U]_{(1,3)}}{v_1}g^{(1)}$ is the updating of $f^{(2)}$ at the step $k=(1,3)$ of the BMSa then  $\G_b=\{h_b,\overline f^{(2)}\}$ is a Groebner basis for $\bL(U)$.
    \item If $t> 4$ then $\G_b=\{h_b,f^{(2)}\}$ is a Groebner basis for $\bL(U)$.
  
  \end{enumerate}

\item Suppose that following the BMSa 
we have that $d=3$, $s^{(3)}=(0,2)$, $t=s_1^{(1)}+s_1^{(2)}$. Then 
there is a list of values $b,b_0,\dots,b_{s^{(1)}_1-(s^{(2)}_1+2)},c,c_0$ in $\L$, such that setting $\mathbf{b}=(b,b_0,\dots,b_{s^{(1)}_1-(s^{(2)}_1+2)})$ and $\mathbf{c}=(c,c_0)$ the polynomials 
\begin{eqnarray*}
 h_{\mathbf b}=f^{(1)}-\frac{b}{v_2}g^{(2)}-\\
 -\left(\sum_{n=0}^{s^{(1)}_1-(s^{(2)}_1+2)}b_n X_1^{s^{(1)}_1-(n+s^{(2)}_1+2)}\right)\frac{1}{v_1}g^{(1)}
\end{eqnarray*}
 and 
 \[h_\mathbf{c}=f^{(2)}-\frac{c}{v_1}g^{(1)}-\frac{c_0}{v_2}g^{(2)}\]
verify that $\G_{(\mathbf b,\mathbf{c})}=\{h_{\mathbf b}, h_{\mathbf{c}}, \overline f^{(3)}\}$ is a Groebner basis for $\bL(U)$,  where $\overline f^{(3)}$ is the final updating of $f^{(3)}$ through the following steps:
\begin{enumerate}
 \item At $(t-2,2)$ in case $s^{(2)}_1\leq 2$ .
 \item At $(t-3,3)$ in case $s^{(3)}_1 \leq 3$ .
 \item At $(0,3)$ in case $t=3$ .
\end{enumerate}
 \end{enumerate}
 \end{enumerate}
\end{proposition}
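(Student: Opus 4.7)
The plan is to mirror the proof of Proposition~\ref{salvando los muebles para los casos del Teorema con lex}, interchanging the roles of $X_1$ and $X_2$ to account for the reverse graded order. In every case, the construction of $h_b$, $h_c$, $h_{\mathbf b}$ and $h_{\mathbf c}$ is forced: because the relevant $l-s^{(i)}$ lies in $\Delta(u^l)$ by hypothesis (which one reads off the defining points and the shape of the footprint), the single applicable construction is Procedure~1 of the Berlekamp procedure, and an explicit computation of the displacement vector $\mathbf{e}$ from \cite[pp.~271--272]{BS2} produces precisely the polynomials written in the statement.

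The bulk of the argument is then to verify that no further \emph{strict} update is needed past $l$, so that the candidate set (possibly with the trailing $\overline f^{(i)}$ adjoined) is already a Groebner basis for $\bL(U)$. The main tools are Theorem~\ref{condiciones suficientes} and Remark~\ref{fuera de rango es rec lin}: under the reverse graded order, the points $k\in\S(t)$ with $k>_T (1,t-1)$ (resp.\ $(t-1,1)$) reduce to the axis points of degrees $t,\dots,2t-1$. For each such $k$ and each polynomial $h$ of the candidate basis one checks either that $LP(h)\not\preceq k$ (so there is a recurring relation by convention in Definition~\ref{recurrencia lineal}) or that $k-LP(h)\notin\Delta$ (which, since $|\Delta|=t$, forces $h[U]_k=0$ by Remark~\ref{fuera de rango es rec lin}).

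The exceptional axis steps where $k-LP(h)$ does fall in $\Delta$ are precisely the ones that give rise to the polynomials $\overline f^{(i)}$ listed in the statement. For the $d=2$ items (1(a), 2(a), 2(b)), the finite list of such axis steps yields a single further pass of Procedure~1 against $g^{(1)}$, producing the stated $\overline f^{(1)}$ or $\overline f^{(2)}$; the small-$t$ sub-cases (such as $t=2$ in 1(a), and $t=2,4$ in 2(b)) are a degeneration of the generic pattern, which is why 2(b)i is subsumed in 1(a)i.

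The delicate point, and the main obstacle, is the inductive construction of the trailing correction vectors $\mathbf b$ and $\mathbf c$ in Items 1(b) and 2(c). Following Case~1(c) of the previous proposition, I build $h_{\mathbf c_a}$ inductively by applying Procedure~1 at the successive axis steps $k_a=(0,s^{(3)}_2+1+a)$ for $a=0,\dots,s^{(3)}_2-(s^{(2)}_2+1)$; at each step a new scalar $c_a$ appears, absorbed by the monomial multiplier $X_2^{s^{(3)}_2-(a+s^{(2)}_2+1)}g^{(2)}$, and past this range $k_a-LP(h_{\mathbf c})\notin\Delta$ so the process terminates. The symmetric construction along the $X_1$-axis gives 2(c), where the three sub-cases for $\overline f^{(3)}$ correspond to the specific small values of $s^{(2)}_1$, $s^{(3)}_1$ and $t$ for which $k-s^{(3)}$ happens to fall in $\Delta$ at one of the steps $(t-2,2)$, $(t-3,3)$ or $(0,3)$; enumerating these by a direct case check on the defining points finishes the argument.
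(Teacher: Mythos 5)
Your overall strategy is the same as the paper's: apply Procedure~1 at the blocked step $l$ with an unknown scalar, then sweep the remaining indexes of $\S(t)$ and show that each further update is either killed by the convention in Definition~\ref{recurrencia lineal}, killed because $k-LP(h)\notin\Delta$ with $|\Delta|=t$, or absorbed into a finite list of further unknown Procedure~1 corrections. (The paper in fact only writes out the case $l=(t-1,1)$ and declares the other analogous, so your sketch of Item~1 is not directly comparable.) Two concrete points in your write-up would fail as stated, though. First, your blanket claim that the points $k\in\S(t)$ with $k>_T l$ ``reduce to the axis points of degrees $t,\dots,2t-1$'' is false for $l=(t-1,1)$: under the reverse graded order the whole diagonal $(t-2,2),(t-3,3),\dots,(0,t)$ still lies ahead, and it is precisely at these interior points that the exceptional updates occur --- $c_0$ at $(t-2,2)$ when $s^{(1)}_1-s^{(2)}_1=1$, the $\overline f^{(3)}$ updates at $(t-2,2)$, $(t-3,3)$, $(0,3)$, and the $\overline f^{(2)}$ update at $(1,3)$ when $t=4$. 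You do recover these steps ad hoc in your last paragraphs, but the general verification argument of your second paragraph, taken literally, would skip them; the case analysis over the diagonal (as in the paper's treatment of $l<_T k<_T(0,t)$) is where the real work of Items 2(b) and 2(c) lives and cannot be waved through.

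Second, your indexing of the axis steps in Item~1(b) is wrong: the corrections to $h_{\mathbf c}$ arise at $k=(0,t+a)=(0,s^{(2)}_2+s^{(3)}_2+a)$ for $a=0,\dots,s^{(3)}_2-(s^{(2)}_2+1)$, not at $(0,s^{(3)}_2+1+a)$; the two coincide only when $s^{(2)}_2=1$, and for $s^{(2)}_2\geq 2$ your listed steps are not even successors of $l$ in $\S(t)$. The termination criterion ($k-LP(h_{\mathbf c})\notin\Delta$ once $a>s^{(3)}_2-s^{(2)}_2-1$) and the resulting count of unknown scalars are right, so this is a repairable slip rather than a structural flaw, but the displacement exponents $X_2^{s^{(3)}_2-(n+s^{(2)}_2+1)}$ in the statement are computed from the correct steps and you should rederive them accordingly.
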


\begin{proof}
For the sake of completeness, here we will consider $l=(t-1,1)$, which, in fact, include more cases. By hypothesis we may denote $\Delta=\Delta(u^{l+1}) =\Delta(u^l)$. 

\textbf{\textit{2(a).}} We have that $s^{(2)}\preceq l$; so we set $f^{(2)}[U]_l=b$. If $b\neq 0$, then, as $l-s^{(2)}\in \Delta$ we apply Procedure~1 with $i=2$ and $j=1$. Again, $\mathbf{e}=(0,0)$, so that  $h_{f^{(2)},g^{(1)}}=f^{(2)}-\frac{b}{v}g^{(1)}=h_b$ and if one has $b=0$ then $h_b=f^{(2)}$. Let us see possible updatings.

Consider $l<_Tk=(k_1,k_2)\in \S(t)$. If $l<_T k <_T (0,t)$ then, on the one hand $s^{(1)}\not\preceq k$ so that $f^{(1)}[U]_k=0$ by convention in Definition~\ref{recurrencia lineal}; on the other hand, as $k-s^{(2)}\not\in\Delta$ we must have that $h_b[U]_k=0$, because $|\Delta|=t$.

If $(0,t)\leq_Tk$ then, for $k\in \{(0,t),\dots,(0,2t-1)\}$ there are not updates. For $k\in \{(t+1,0),\dots,(2t-1,0)\}$ we have that $LP(h_b)\not\preceq k$; however, $k-s^{(1)}\in \Delta$ so it may be subsequent updates, all of them by applying Procedure~1 of the BMSa, obtaining, finally, $\overline f^{(1)}$ and so  $\G_b=\{\overline f^{(1)}, h_b\}$ is a Groebner basis for $\bL(U)$.

\textbf{\textit{2(b).}} The case $t=2$ has been considered in \textbf{\textit{1(a)i}}, so, we suppose $t>2$. We have that $s^{(2)}\not\preceq l$ and $s^{(1)}\preceq l$; so we set $f^{(1)}[U]_l=b$. If $b\neq 0$ then, as $l-s^{(1)}=(\frac{t}{2}-1,1)\in \
\Delta$ we apply Procedure~1 with $i=1$ and $j=1$. Again, $\mathbf{e}=(0,0)$, so that  $h_{f^{(1)},g^{(1)}}=f^{(1)}-\frac{b}{v}g^{(1)}=h_b$ and if one has $b=0$ then $h_b=f^{(1)}$. Let us see possible updatings.

Consider, again, $l<_Tk=(k_1,k_2)\in \S(t)$. If $l<_T k <_T (0,t)$ then, for $LP(h_b)=s^{(1)}\preceq k$, as $k_2\geq 2$ we must have $(k_1-\frac{t}{2},k_2)\not\in\Delta$, so that $h_b[U]_k=0$, because $|\Delta|=t$. On the other hand, $k-s^{(2)}\in \Delta$ if $k_2=2,3$ and $t-k_2\leq \frac{t}{2}$. Let us study those conditions and recall that $t\geq 4$.

If $k_2=2$ then $t-2\geq \frac{t}{2}$, so that $k-s^{(2)}\not\in \Delta$. If $k_2=3$, $k-s^{(2)}\in \Delta$ implies $t-3<\frac{t}{2}$ so that $t<6$ (we recall that $t$ is even). If $t=4$ then $k-s^{(2)}=(1,1)\in \Delta$ and so it may be updating for $f^{(2)}$ by applying Procedure~1 of the BMSa. In this case, $k=(1,3)$.

Suppose $(0,t)\leq_Tk$. If $k=(k_1,0)$ with $k_1\geq t+1$ then $k-s^{(1)}\not\in\Delta$ and $s^{(2)}\not\preceq k$. If $k=(0,k_2)$ with $k_2\geq t$ then $s^{(1)}\not\preceq k$ and $k-s^{(2)}\not\in \Delta$.

So, for $t=4$ we have that $k=(1,3)$ is the last possible updating; hence $\overline f^{(2)}= f^{(2)}-\frac{f^{(2)}[U]_{(1,3)}}{v_1}g^{(1)}$  form the Groebner basis  $\G_b=\{h_b,\overline f^{(2)}\}$ for $\bL(U)$. As we have seen, if $t>4$ there are no more updatings so that $\G_b=\{h_b,f^{(2)}\}$ is a Groebner basis for $\bL(U)$.

\textbf{\textit{2(c).}} In this case, we have $s^{(1)}\preceq l$, $s^{(2)}\preceq l$ and $s^{(3)}\not\preceq l$. We set $f^{(1)}[U]_l=b$ and $f^{(2)}[U]_l=c$. If $b,c\neq 0$ then, as $l-s^{(1)}\in \Delta$ and $l-s^{(2)}\in\Delta$ we apply Procedure~1 of the BMSa obtaining, $h_b=f^{(1)}-\frac{b}{v_2}g^{(2)}$ and $h_c=f^{(2)}-\frac{c}{v_1}g^{(1)}$. Now we have to see possible updates for $F_{l+1}=\{h_b,h_c,f^{(3)}\}$, with $LP(h_b)=s^{(1)}$ and $LP(h_c)=s^{(2)}$.

As above, we consider  $l<_Tk=(k_1,k_2)\in \S(t)$. Suppose that $l<_T k <_T (t,0)$. For $i=1$, it may happen that $LP(s^{(1)})\preceq k$; however, as $k_2\geq 2$ then $k-s^{(1)}\not\in \Delta$, so there are no updates for $h_b$.  

For $i=2$, if $s^{(2)}\preceq k$ then $k-s^{(2)}\in \Delta$ implies that $k_2-1\leq 1$; as $l<_tk$ then $k_2=2$; so that $k=(t-2,2)$. As $k-s^{(2)}=(s^{(1)}_1-2,1)$ we must have $s^{(1)}_1-2\leq s^{(2)}_1-1$, and then $s^{(1)}_1-s^{(2)}_1=1$. In this case, setting $h_c[U]_{(t-2,2)}=c_0$ (possibly with $c_0=0$), if $c_0\neq 0$, the updating is $ h_{(c,c_0)}= h_c-\frac{c_0}{v_2}g^{(2)}=f^{(2)}-\frac{c}{v_1}g^{(1)}-\frac{c_0}{v_2}g^{(2)}$. If $c_0=0$ then $h_{(c,c_0)}=h_c$.

For $i=3$, we have that $s^{(3)}\preceq k$. Now, if $k-s^{(3)}=(k_1,k_2-2)\in \Delta$ then $k_2-2\leq 1$, so $k_2=2,3$. If $k=(t-2,2)-s^{(3)}\in \Delta$ then $t-2\leq s^{(1)}_1$, hence $s^{(2)}_1\leq 2$. If $k=(t-3,3)-s^{(3)}\in \Delta$ then $t-3\leq s^{(2)}_1$, hence $s^{(1)}_1\leq 3$. In these cases it may be updates for $f^{(3)}$.

Suppose $(0,t)\leq_Tk$. If $k=(0,k_2)$ with $k_2\geq t$ then $s^{(i)}\not\preceq k$, for $i=1,2$ and $s^{(3)}\preceq k$. In the last case, setting $k_2=t+n$, for some $n\in \N$, if $k-s^{(3)}\in \Delta$ then $t+n-2\leq 1$; which means that $n=0$ and $t=3$. This is the last step in which it is possible to make a subsequent updating for $f^{(3)}$.

Suppose that $k=(k_1,0)$, with $k_1\geq t+1$; say $k_1=t+n+1$ with $n\in \N$. For $i=1$, if $k-s^{(1)}\in\Delta$ then $0\leq n\leq s^{(1)}_1-s^{(2)}_1-2$. As in previous arguments, by implementing Procedure~1 of the BMSa, we have that there are (unknown) values $b_0,\dots,b_{
s^{(1)}_1-s^{(2)}_1-2}$ such that, setting $\mathbf{b}=(b,b_0,\dots,b_{s^{(1)}_1-(s^{(2)}_1+2)})$ we may form the polynomial
\begin{eqnarray*}
 h_{\mathbf b}&=&f^{(1)}-\frac{b}{v_2}g^{(2)}-\\
 &-&\left(\sum_{n=0}^{s^{(1)}_1-(s^{(2)}_1+2)}b_n X_1^{s^{(1)}_1-(n+s^{(2)}_1+2)}\right)\frac{1}{v_1}g^{(1)}.
\end{eqnarray*}

For $i=2,3$ we have that $s^{(i)}\not\preceq k$, and this completes the proof.
 \end{proof}

Now we shall consider the cases in which the unknown values are of the form $l=(0,l_2)$ or $l=(l_1,0)$; that is, we deal with the cases \textit{2a)} and \textit{2b)} post the General Assumption~\ref{Situac general}.

\begin{lemma}\label{inferencia en los ejes general}
 Take the setting of General Assumption~\ref{Situac general}. 
 \begin{enumerate}
  \item If $l_1=0$ and $t+s_2^{(d)} \leq l_2$ then $f^{(2)}[U]_l=0$.
  \item If $l_2=0$ and $t+s_1^{(1)}\leq l_1$ then $f^{(1)}[U]_l=0$.
 \end{enumerate}
\end{lemma}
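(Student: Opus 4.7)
My plan is to reduce the bivariate statement to a one-dimensional Vandermonde argument, exploiting that $U$ is a syndrome table with $\omega(e)\le t$ error positions. I shall carry out item~1 in detail; item~2 is symmetric after swapping the two coordinates (and replacing $s^{(d)}$ by $s^{(1)}$). Throughout I write $f^{(d)}$ for the polynomial in $F_l$ whose leading power $s^{(d)}=(0,s_2^{(d)})$ lies on the $X_2$-axis.

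First I would expand $u_n=e(\boldsymbol{\alpha}^{\tau+n})=\sum_{p\in\supp(e)}e_p\,\alpha_1^{p_1(\tau_1+n_1)}\alpha_2^{p_2(\tau_2+n_2)}$ and substitute this into the definition of $f^{(d)}[U]_k$. For every $k\in\Sigma_{s^{(d)}}$ a routine interchange of sums yields
\[
f^{(d)}[U]_k=\sum_{p\in\supp(e)}e_p\,f^{(d)}(\boldsymbol{\alpha}^p)\,\alpha_1^{p_1(\tau_1+k_1-s_1^{(d)})}\alpha_2^{p_2(\tau_2+k_2-s_2^{(d)})}.
\]
Specialising to $k=(0,j)$ with $j\ge s_2^{(d)}$ and absorbing the $j$-independent factors into constants $A_p$, this collapses to the univariate exponential sum
\[
f^{(d)}[U]_{(0,j)}=\sum_{p\in\supp(e)}A_p\,\alpha_2^{p_2 j}=\sum_{q\in P_2}B_q\,\alpha_2^{qj},
\]
where $P_2$ is the projection of $\supp(e)$ onto the second coordinate and $B_q=\sum_{p:\,p_2=q}A_p$. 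The decisive point is that at most $|P_2|\le\omega(e)\le t$ distinct exponentials $\alpha_2^q$ appear.

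Next I would count the zeros already forced on this sum. Since $f^{(d)}\in\bL(u^l)$ and $s^{(d)}\preceq(0,j)$ whenever $j\ge s_2^{(d)}$, the definition of $\bL(u^l)$ gives $f^{(d)}[U]_{(0,j)}=0$ for every such $j$ with $(0,j)<_T l$. Under either admissible ordering (lex with $X_1>X_2$, or graded with $X_2>X_1$), the comparison $(0,j)<_T(0,l_2)$ is simply $j<l_2$, so I obtain $l_2-s_2^{(d)}\ge t$ consecutive zeros indexed by $j\in\{s_2^{(d)},\ldots,l_2-1\}$. Setting $\mu=|\{q\in P_2:B_q\neq 0\}|\le t$, choosing the first $\mu$ of these zeros and writing $c_q=B_q\alpha_2^{q s_2^{(d)}}$ converts them into the homogeneous Vandermonde system $\sum_q c_q(\alpha_2^q)^i=0$ for $i=0,\ldots,\mu-1$. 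Because $\alpha_2$ is a primitive $r_2$-th root of unity and $P_2\subseteq\{0,\ldots,r_2-1\}$, the values $\alpha_2^q$ are pairwise distinct, so the Vandermonde matrix is invertible, every $B_q$ vanishes, and therefore $f^{(d)}[U]_{(0,j)}=0$ for every $j$, in particular for $j=l_2$.

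The substantive mathematical content lives entirely in the reduction from a bivariate linear form to a length-$\le t$ exponential sum in the axial variable; once this is achieved the Vandermonde step is automatic. The only delicate bookkeeping I foresee is verifying, under both admissible orderings, that the axial comparison $(0,j)<_T(0,l_2)$ reduces to $j<l_2$, but this is immediate from the definitions of lex and graded orders. I do not expect further obstacles, and item~2 follows by the mirror argument along the $X_1$-axis, using $f^{(1)}$, the projection $P_1$ of $\supp(e)$ to the first coordinate, the primitive $r_1$-th root $\alpha_1$, and the bound $l_1\ge t+s_1^{(1)}$.
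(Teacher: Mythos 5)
Your proof is correct, but it takes a genuinely different route from the paper's. The paper disposes of the lemma in two lines by footprint counting: if $f^{(d)}[U]_l\neq 0$ then Sakata's footprint-growth lemma forces $l-s^{(d)}=(0,l_2-s_2^{(d)})\in\Delta(u^{l+1})$, so the footprint contains at least $l_2-s_2^{(d)}+1$ points on the $X_2$-axis plus $s_1^{(1)}-1$ further points on the $X_1$-axis, giving $|\Delta(u^{l+1})|\geq l_2-s_2^{(d)}+s_1^{(1)}>t$, which contradicts $|\Delta(u^{l+1})|\leq|\Delta(U)|=\omega(e)\leq t$. You instead bypass the footprint machinery entirely: you unwind $u_n=e(\boldsymbol{\alpha}^{\tau+n})$ to write $f^{(d)}[U]_{(0,j)}$ as an exponential sum in at most $\omega(e)\leq t$ pairwise distinct frequencies $\alpha_2^q$ (distinctness holding because $\boldsymbol{\alpha}\in\mathcal{R}$), observe that membership of $f^{(d)}$ in $\mathbf{\Lambda}(u^l)$ already supplies $l_2-s_2^{(d)}\geq t$ consecutive axial zeros, and kill all coefficients by an invertible (generalized) Vandermonde system; your reduction of $(0,j)<_T(0,l_2)$ to $j<l_2$ under both admissible orderings is the right check and is indeed immediate. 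The paper's argument is shorter and stylistically uniform with the rest of the section, all of whose proofs are footprint-cardinality arguments; yours is more self-contained (it does not invoke the growth of $\Delta$ at a failure step) and actually proves the stronger fact that $f^{(d)}[U]_{(0,j)}=0$ for every $j\geq s_2^{(d)}$, at the cost of leaning directly on the syndrome-table structure rather than on the abstract array formalism. One cosmetic remark: the statement as printed says $f^{(2)}$, but the paper's own proof (and yours) establishes the claim for $f^{(d)}$, the unique element of $F_l$ with $LP(f^{(d)})=(0,s_2^{(d)})$; your explicit identification of that polynomial is the correct reading.
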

\begin{proof}
In Case~1, we have that $LP(f^{(d)}) \preceq l$. If $f^{(d)}[U]_l\neq 0$ then $t\leq l_2-s_2^{(d)}<l_2-s_2^{(d)}+s_1^{(1)}\leq |\Delta(u^{l+1})|\leq t$, which is impossible.

Case~2 is completely analogous.
\end{proof}

\begin{proposition}\label{inferencia en los extremos de los ejes}
Take the setting of General Assumption~\ref{Situac general}.
 
 \begin{enumerate}
  \item If $l=(0, t+s_2^{(d)}-1)$ then there exists $f\in F_l$ such that $f[U]_l=0$; except in the case $s^{(1)}=(1,0)$ and $d=2$.
  \item If $l=(t+s_1^{(1)}-1,0)$ then there exists $f\in F_l$ such that $f[U]_l=0$; except in the case $s^{(d)}=(0,1)$ and $d=2$.
 \end{enumerate}
\end{proposition}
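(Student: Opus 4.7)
The plan is to use the footprint-counting argument developed earlier in this section, as in Lemma~\ref{inferencia en los ejes general} and Theorem~\ref{Caso l1 y l2 mayores que 1}. I will only treat case~(1); case~(2) is identical after interchanging the two coordinates and the roles of the indices $1$ and $d$.

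My first step will be to observe that for every $i<d$ one has $s_1^{(i)}\geq 1$, so $s^{(i)}\not\preceq l=(0,t+s_2^{(d)}-1)$ and hence $f^{(i)}[U]_l=0$ by the convention of Definition~\ref{recurrencia lineal}. Only $f^{(d)}$ has $LP(f^{(d)})\preceq l$, so the real content of the proposition is to establish $f^{(d)}[U]_l=0$ non-conventionally; I will do this by assuming $f^{(d)}[U]_l\neq 0$ outside the excluded case and deriving a contradiction.

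Next I will rule out Procedure~1. Because the axis segments $\{(0,0),\dots,(s_1^{(1)}-1,0)\}$ and $\{(0,0),\dots,(0,s_2^{(d)}-1)\}$ both sit in $\Delta(u^l)$, one has $s_1^{(1)}+s_2^{(d)}-1\leq |\Delta(u^l)|\leq t$. Outside the exception, $s_1^{(1)}\geq 2$ (using the strict chain $s_1^{(1)}>\cdots>s_1^{(d)}=0$ to see that $s_1^{(1)}=1$ forces $d=2$), so $s_2^{(d)}\leq t-1$ and $l-s^{(d)}=(0,t-1)\notin\Delta(u^l)$. Thus Procedure~2 is invoked, placing $(0,t-1)$ in $\Delta(u^{l+1})$.

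Finally I will count. Downward closure of $\Delta(u^{l+1})$ under $\preceq$ forces $\{(0,0),\dots,(0,t-1)\}$ into the new footprint ($t$ points on the $y$-axis), while the $x$-axis still contributes the $s_1^{(1)}-1$ points $\{(1,0),\dots,(s_1^{(1)}-1,0)\}$ inherited from $\Delta(u^l)$. Hence
\[
|\Delta(u^{l+1})|\geq t+s_1^{(1)}-1\geq t+1,
\]
contradicting $|\Delta(u^{l+1})|\leq \omega(e)\leq t$ via Theorem~\ref{el locator y el delta} and equation~(\ref{cardinales de delta y conj definicion}). The only delicate point is confirming that $s_1^{(1)}\geq 2$ really does exclude only the listed exception; this is immediate from the strict chain on the $s_1^{(i)}$.
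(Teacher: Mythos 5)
Your proof is correct and takes essentially the same route as the paper's: assume $f^{(d)}[U]_l\neq 0$, observe that $l-s^{(d)}=(0,t-1)$ must then lie in $\Delta(u^{l+1})$, and use the cardinality bound $|\Delta(u^{l+1})|\leq t$ together with the $s_1^{(1)}-1$ points already sitting on the other axis to force $s^{(1)}=(1,0)$ and $d=2$. You merely spell out the counting (and the purely conventional vanishing of $f^{(i)}[U]_l$ for $i<d$) more explicitly than the paper does.
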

\begin{proof}
Set $\Delta=\Delta(u^{l+1})$ and let us see the first case, that is $l_2=t+s_2^{(d)}-1$. Then $s^{(d)}\preceq l$. Suppose that $f^{(d)}[U]_l \neq 0$. As $l-s^{(d)}=(0,t-1)$, to have $l-s^{(d)}\in\Delta$ implies that $s^{(1)}=(1,0)$ and $d=2$, because $|\Delta|\leq t$. These are possible values and for them \textbf{we cannot guarantee that we have linear recurring relations}. 

Note that in this case, it may happen that $\Delta(u^l)\varsubsetneq\Delta$ and then we will have to consider Procedure~2 of the BMSa too.

The other case is completely analogous.
\end{proof}

As in results for the case $l_1,l_2\neq 0$, we have again cases in which it is not possible to infer the missing value $u_l$ and we will try an alternative way. 

\begin{proposition}\label{salvando los muebles en los ejes}
Take the setting of General Assumption~\ref{Situac general}. Suppose $d=2$ and $g=g^{(1)}\in G_l$ is such that $g[U]_{k}=v\neq 0$, with $k<_Tl$ (see Remark~\ref{Conjunto de los G}). We also may suppose that, for $i\in \{1,\dots,d\}$, $\supp(f^{(i)})\setminus LP(f^{(i)})\subset \Delta(u^l)$ by Remark~\ref{fuera de rango es rec lin}.
\begin{enumerate}
 \item Suppose that following the BMSa we have that $d=2$, $s^{(1)}=(1,0)$ and $l=(0, t+s_2^{(2)}-1)$. Then, there exists a list $b,b_0,\dots,b_{t-s_2^{(2)}-1}$ of elements of $\L$ such that, setting $\mathbf{b}=(b,b_0,\dots,b_{t-s_2^{(2)}-1})$, the polynomial 
 \begin{eqnarray*}
  h_\mathbf{b}&=&X_2^{t-s^{(2)}_2}f^{(2)}-\frac{b}{v}g-\\
  &&-\left(\sum_{n=0}^{t-s_2^{(2)}-1}X_2^{t-(n+s_2^{(2)}+1)}b_a\right)\frac{1}{b}f^{(2)}
 \end{eqnarray*}
 verifies that $\G_\mathbf{b}=\{\overline f^{(1)}, h_\mathbf{b}\}$ is a Groebner basis for $\bL(U)$, where
 \begin{enumerate}
  \item  $\overline f^{(1)}$ comes from to continue implementing the BMSa \textbf{under the lexicographic ordering} to $f^{(1)}$ at $k\in \{(1,0),\dots,(1,t-1)\}\subset\S(t)$ by applying, in any case, Procedure~1.
  \item $\overline f^{(1)}=f^{(1)}$ if we implement the BMSa under the inverse graded ordering.
 \end{enumerate}
 
 \item  Suppose that following the BMSa we have that $d=2$, $s^{(2)}=(0,1)$ and $l=(t+s_1^{(1)}-1,0)$. Then, there exists a list $b,b_0,\dots,b_{t-s_1^{(1)}-1}$ of elements of $\L$ such that, setting $\mathbf b=(b,b_0,\dots,b_{t-s_1^{(1)}-1})$, the polynomial 
 \begin{eqnarray*}
  h_\mathbf{b}&=&X_1^{t-s^{(1)}_1}f^{(1)}-\frac{b}{v}g-\\
  &&-\left(\sum_{n=0}^{t-s_1^{(1)}-1}X_1^{t-(n+s_1^{(1)}+1)}b_n\right)\frac{1}{b}f^{(1)}
 \end{eqnarray*}
 verifies that $\G_\mathbf{b}=\{h_\mathbf{b}, f^{(2)}\}$ is a Groebner basis for $\bL(U)$.
\end{enumerate}
\end{proposition}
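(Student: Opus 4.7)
The plan is to mirror the strategy used in Propositions~\ref{salvando los muebles para los casos del Teorema con lex} and~\ref{salvando los muebles para los casos del Teorema con Grad}: run the BMSa through step $l$ and all remaining indices of $\S(t)$, introducing formal parameters for every discrepancy that depends on the unknown $u_l$, and then invoke Theorem~\ref{condiciones suficientes} to conclude. I will only detail item (1); item (2) follows by exchanging the roles of $X_1,X_2$.

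At step $l=(0,t+s_2^{(2)}-1)$, since $LP(f^{(1)})=(1,0)\not\preceq l$, Definition~\ref{recurrencia lineal} gives $f^{(1)}[U]_l=0$ by convention. Write $b=f^{(2)}[U]_l\in\L$. Because $d=2$ and $s^{(1)}=(1,0)$, the current footprint is $\Delta(u^l)=\{(0,0),\dots,(0,s_2^{(2)}-1)\}$, so $l-s^{(2)}=(0,t-1)\notin\Delta(u^l)$ (unless $s_2^{(2)}=t$, in which case the argument simplifies). If $b\neq 0$, Procedure~2 of the BMSa replaces $f^{(2)}$ by a polynomial of multidegree $(0,t)$; using $g=g^{(1)}$ with $g[u^{k_1}]_{k_1}=v$ and $k_1-LP(g)=(0,s_2^{(2)}-1)$, the Berlekamp construction produces $X_2^{t-s_2^{(2)}}f^{(2)}-(b/v)g$. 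The new footprint is $\{(0,0),\dots,(0,t-1)\}$, of size $t$.

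Next I would process the indices $k>_T l$ lying in $\S(t)$. On the axis $k=(0,t+n)$ with $0\le n\le t-s_2^{(2)}-1$, the discrepancy $b_n$ is unknown (it depends on $u_l$ and on further axis entries outside $\D_{\bs\alpha}(C)$), and one has $k-(0,t)=(0,n)\in\Delta(u^{l+1})$, so Procedure~1 applies; the natural "partner" polynomial of leading power $(0,s_2^{(2)})$ with known discrepancy $b$ at $l$ is precisely $f^{(2)}$, which explains the shifted $X_2$-multiples of $f^{(2)}$ (divided by $b$) in the displayed expression for $h_{\mathbf b}$. Beyond $k_2\ge 2t$ on the $X_2$-axis we leave $\S(t)$. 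For the polynomial $f^{(1)}$: under the lexicographic ordering the indices $(1,0),\dots,(1,t-1)$ are processed after $l$ and each of them satisfies $k-s^{(1)}=(0,k_2)\in\Delta(u^{l+1})$, so Procedure~1 gives the final updating $\overline{f}^{(1)}$; under the inverse graded ordering these points come before $l$, whence $\overline{f}^{(1)}=f^{(1)}$.

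It remains to verify that for every other $k\in\S(t)$ with $k>_T l$ neither of the two running polynomials produces a nonzero discrepancy, so that the process really terminates with $\G_{\mathbf b}=\{\overline{f}^{(1)},h_{\mathbf b}\}$. This is a routine case-by-case check as in the earlier propositions: for each such $k$ one shows either $LP\not\preceq k$ (zero by convention) or $k-LP\not\in\Delta(u^{l+1})$, which together with $|\Delta(u^{l+1})|=t$ forces the discrepancy to vanish by Theorem~\ref{condiciones suficientes} and Remark~\ref{fuera de rango es rec lin}. Hence $\G_{\mathbf b}$ coincides with the output of the BMSa on the whole of $\S(t)$ for the correct (but unknown) values of $\mathbf b$, and Theorem~\ref{condiciones suficientes} identifies it with a Gr\"obner basis of $\bL(U)$. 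The main difficulty, exactly as in the previous propositions, is not the existence of $\mathbf b$ but the fact that the tuple is implicitly defined; actually recovering it from the available syndrome information is the task handled in Section~\ref{final procedure}.
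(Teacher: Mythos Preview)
Your proposal is correct and follows essentially the same route as the paper's proof: note $f^{(1)}[U]_l=0$ by convention, set $b=f^{(2)}[U]_l$, apply Procedure~2 when $s_2^{(2)}<t$ (the paper splits off the easy case $s_2^{(2)}=t$, where Procedure~1 suffices) to get $h_b$ with $LP(h_b)=(0,t)$ and $G_{l+1}=\{f^{(2)}\}$, then run Procedure~1 on the remaining $X_2$-axis indices using $f^{(2)}$ as the auxiliary polynomial, and handle $f^{(1)}$ on the row $k_1=1$ exactly as you describe for each ordering. One small slip to fix: the $X_2$-axis indices still to be processed after $l$ are $k=(0,\,t+s_2^{(2)}+n)$ for $0\le n\le t-s_2^{(2)}-1$, not $k=(0,t+n)$, since $l_2=t+s_2^{(2)}-1$; the count of parameters $b_0,\dots,b_{t-s_2^{(2)}-1}$ is unaffected.
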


\begin{proof}
We will only see the case in which $l=(0, t+s_2^{(2)}-1)$ as the proof the other are analogous. We denote $\Delta=\Delta(u^{l+1})$ and we set $f^{(2)}[U]_l=b$, where it may be even $b=0$. Supose that $b\neq 0$. Proposition~\ref{inferencia en los ejes general} it must be $d=2$.

In order to make the sets $F_{l+1}$, $G_{l+1}$ and to study subsequent updates we have to consider two cases.

\textbf{Case 1}. Suppose that $s^{(2)}=(0,t)$. Then $l-s^{(2)}\in\Delta(u^l)$ and so, we apply Procedure~1 of the BMSa. Following, again \cite[Paragraph III.B, pp. 271-272]{BS2} one set $i=2$, $j=1$ then $\mathbf{e}=(0,0)$ and  $h_{f^{(2)},g^{(1)}}=f^{(2)}-\frac{b}{v}g=h_{b}$. Then $F_{l+1}=\{f^{(1)},h_b\}$, with $LP(h_b)=s^{(2)}$ and $G_{l+1}=G_l$.

Now we begin by considering possible updates under the lexicographic ordering. If $l<_T k=(k_1,k_2)\in \S(t)$ then $k_1>0$ and hence $k_2<t$ so that $h_b[U]_k=0$, by convention in Definition~\ref{recurrencia lineal}. If $k \in \{(1,0),\dots,(1,t-1)\}$ it may happen that $f^{(1)}[U]_k\neq 0$ and we have then to apply Procedure~1 of the BMSa obtaining some $\overline f^{(1)}$. If $k_1>1$ then it must be $f^{(1)}[U]_k=0$ because $l-s^{(1)}\not\in\Delta$ and $|\Delta|=t$. So, setting $\mathbf b=(b)$ we have that  $\G_\mathbf{b}=\{\overline f^{(1)}, h_\mathbf{b}\}$ is a Groebner basis for $\bL(U)$.

If, otherwise, we are implementing the BMSa under the graded order then $l=(0,2t-1)$ is the last step, and we are finish; so, as $\overline f^{(1)}= f^{(1)}$, setting $\mathbf b=(b)$ we have that  $\G_\mathbf{b}=\{ f^{(1)}, h_\mathbf{b}\}$ is a Groebner basis for $\bL(U)$.
 
 \textbf{Case 2}. We now suppose that $s^{(2)}_2<t$. Then $l-s^{(2)}\not\in\Delta(u^l)$ and so, $\Delta(u^l)\varsubsetneq \Delta$; moreover, $(0,t-1)= l-s^{(2)}\in  \Delta$ so $S=(0,t)$ must be a defining point (see Definition~\ref{descripcion de los Delta}\textit{(2a)} of $\Delta$. Following \cite[Paragraph III.B, pp. 271-272]{BS2} one may see that from all possible new defining points we may only get $S=(0,t)$ which corresponds to Case~4 of Procedure~2, so that our new polynomial will be $h_{b}=h_{f^{(2)},g^{(1)}}$. Then $r=(0,t)$ and $\mathbf e=(0,0)$ so that $h_{b}=X_2^{t-s^{(2)}_2}f^{(2)}-\frac{b}{v}g$. Summarizing, we have $F_{l+1}=\{f^{(1)},h_b\}$ with $LP(h_{b})=(0,t)$ and $G_{l+1}=\{f^{(2)}\}$.
 
 We proceed to study possible updates. First, under the lexicographic ordering, for $l<_T k=(k_1,k_2)\in \S(t)$ we consider $k_1=0$ and $k_1>0$.
 
 If $k_1=0$ then $s^{(1)}\not\preceq k$ and so $f^{(1)}[U]_k=0$ by definition. As $LP(h_b)\preceq k$ we have to see the case with some detail. For $k=(0,t+s^{(2)})$ we set $h_{b}[U]_k=b_0$. If $b_0\neq 0$, as $k-LP(h_{b})\in\Delta$  we apply Procedure~1 of the BMSa with $i=2$ and $j=1$ from which $\mathbf e=(0,t-s^{(2)}_2-1)$ and $h_{(b,b_0)}=h_{b}-\frac{b_0}{b}X_2^{t-(s^{(2)}_2+1)}f^{(2)}$. Note that, if $b_0=0$ we simply get $h_{b}=h_{(b,b_0)}$. By continuing inductively, as in the proof of Proposition~\ref{salvando los muebles para los casos del Teorema con lex}.\textit{.1.c}, we arrive to $F_{(1,0)}=\{f^{(1)},h_{\mathbf b}\}$ and $G_{(1,0)}=\{f^{(2)}\}$.
 
 If $k_1>0$ (that is $(1,0)\preceq k$), then $h_{\mathbf b}[U]_k=0$, because $LP(h_{\mathbf b})\not\preceq k$. Next, we consider $k=(1,0),\dots,(1,t-1)$ together with $s^{(1)}$. In this case $s^{(1)}\preceq k$ and $k-s^{(1)}\in \Delta$ so that any subsequent updating for $f^{(1)}$ must be done by aplying Procedure~1 of the BMSa, obtaining again $s^{(1)}$ as defining point. Let us denote by $\overline f^{(1)}$ the final updating over the mentioned list of pairs. Finally, if $k_1>1$ or $k_2\geq t$ we have that $k-s^{(1)}\not\in\Delta$ so that $\overline f^{(1)}[U]_k=0$, because $|\Delta|=t$. Hence,  $\G_\mathbf{b}=\{\overline f^{(1)}, h_\mathbf{b}\}$ is a Groebner basis for $\bL(U)$.
 
Consider now $l<_T k=(k_1,k_2)\in \S(t)$ under the reverse graded ordering. Then $k_1=0$ or $k_1\geq t+s^{(2)}>0$. Let us begin by considering $k_1>0$. In this case, $k_2=0$, so that $LP(h_\mathbf{b}) \not\preceq k$ and $k-s^{(1)}\not\in \Delta$, so that there are no any possible updating.

If $k_1=0$ then $s^{(1)}\not\preceq k$ and so $f^{(1)}[U]_k=0$ by convention in Definition~\ref{recurrencia lineal}. As $LP(h_b)\preceq k$ we proceed as in the case of the lexicographic ordering. For $k=(0,t+s^{(2)})$ we set $h_{b}[U]_k=b_0$. If $b_0\neq 0$, as $k-LP(h_{b})\in\Delta$  we apply Procedure~1 of the BMSa with $i=2$ and $j=1$ from which $\mathbf e=(0,t-s^{(2)}_2-1)$ and $h_{(b,b_0)}=h_{b}-\frac{b_0}{b}X_2^{t-(s^{(2)}_2+1)}f^{(2)}$. Note that, if $b_0=0$ we simply get $h_{b}=h_{(b,b_0)}$. Again, as in the proof of Proposition~\ref{salvando los muebles para los casos del Teorema con lex}.\textit{.1.c}, we arrive to $F_{(1,0)}=\{f^{(1)},h_{\mathbf b}\}$ and $G_{(1,0)}=\{f^{(2)}\}$.
\end{proof}

\subsection{The final procedure.}\label{final procedure}
Once it has constructed $\G_{\mathbf b}$ or $\G_{\mathbf b,\mathbf c}$, we have to find the concrete values of $\mathbf b$ and $\mathbf c$. To do this, there are not shortcuts.

The only way is the trial and error method, by constructing for each element or vectors of elements of $\L$ the error polynomial, say  $e_{\mathbf b}$ or $e_{\mathbf b,c}$ and checking if one obtains all known values of $U$ (see the following example). As we have a few missing values, by \cite[Corollaries 34, 35]{BS2} we may be sure that there is only one correct value or pair of values.\\

\begin{example}\label{Caso 1 t es 3}
We continue working with Example~\ref{ejemplo planteamiento}; that is $q=2$, $r_1=r_2=15$, so that $\L=\F_{2^4}$. We have an abelian code $C$ in $\F_2(15,15)$ with defining set $\D_{\bs \alpha}(C)=Q(0,0)\cup Q(0,1)\cup Q(0,3)\cup Q(0,5)\cup Q(1,0)\cup Q(3,0)\cup Q(5,0)\cup Q(1,1)\cup Q(2,1)$, with respect to some $\bs\alpha\in \R$. \\
 
 As we comment, by results in \cite{BBCS2}, the strong apparent distance of $C$ is $sd^*(C)=8$, so we set $t=3$.
 
A polynomial $f\in \F_2(15,15)$ was received and we compute the syndrome values over $\D_{\bs \alpha}(C)$. We write, again, for $\tau=(0,0)$ the values corresponding to $\S(3)$.

  \[U=\left(\begin{array}{lllllllllllllll}
   1 & a^6 & a^{12} & a^{9} & a^{9}& 0\\
   
   a^{12} & a^{3} & u_{(1,2)}  \\
   
   a^{9} & a^{8}  \\
   
   a^{7} \\
   
   a^{3}\\
   a^5
  \end{array}\right)\]
  
We recall that the value $u_{(1,2)}$ is missing because $(1,2)\not\in \D_{\bs\alpha}(C)$. Now we implement the BMSa under the lexicographic ordering and we arrive at step $l=(1,2)$ with $F_{(1,2)}= \{X_1+aX_2+a^2,\;X_2^3+a^{6}X_2^2+a^5X_2+a^6\}$, $G_{(1,2)}=\{X_2+a^6\}$, $k_1=(0,3)$ and $(X_2+a^6)[u^{(0,3)}]_{(0,3)}=a$.

As $LP(f^{(2)})\not\preceq (1,2)$ then we have linear recurrence relation by definition.

On the other hand, $LP(f^{(1)})\preceq (1,2)$ and we see that the situation corresponds to Theorem~\ref{inferencia principal}~(1.a). Then we apply Proposition~\ref{salvando los muebles para los casos del Teorema con lex}~(1.a); so that, we set
\[h_b=f^{(1)}+\frac{b}{a}g^{(1)}=X_1+\left(a+\frac{b}{a}\right)X_2+(a^2+a^5b).\]
and so $\G_b=\{h_b,f^{(2)}\}$ will be our Groebner basis.\\

Now we have to find $b\in \L$ such that, following the notation at the beginning of this section, $e_b\left(\bs\alpha^n\right)=u_n$ for all $n\in \S(t)\setminus\{(1,2)\}$.
\begin{description}
 \item[$b=0$.]   Here, $\G_0$ is a reduced Groebner basis for the ideal $\langle \G_0\cup \{X_1^{r_1}-1,X_2^{r_2}-1\}\rangle$, so $e_0=X_1^{11}X_2^{8}+X_1^{4}X_2^{9}+X_1X_2^{4}$. The first element in $\S(t)$, for which equality fails is $(2,1)$, with  $e_0\left(\bs\alpha^{(2,1)}\right)=a^{14}\neq a^8=u_{(2,1)}$.
 \item[$b=a$.]   Here, $\G_a$ is a reduced Groebner basis for the ideal $\langle \G_a\cup \{X_1^{r_1}-1,X_2^{r_2}-1\}\rangle$, so $e_a=X_1^{8}X_2^{9}+X_1^{10}X_2^{8}+X_1^{13}X_2^{4}$. The first element in $\S(t)$, for which equality fails is, again, $(2,1)$, with  $e_a\left(\bs\alpha^{(2,1)}\right)=a^{7}\neq a^8=u_{(2,1)}$.
 \item we continue in this way, until
 \item[$b=a^{11}$.]   Here, $\G_{a^{11}}$ is a reduced Groebner basis for the ideal $\langle \G_{a^{11}}\cup \{X_1^{r_1}-1,X_2^{r_2}-1\}\rangle$, so $e_{a^{11}}=X_1^{14}X_2^{4}+X_1^{2}X_2^{8}+X_1X_2^{9}$. In this case, $e_{a^{11}}\left(\bs\alpha^n\right)=u_n$ for all $n\in \S(t)\setminus\{(1,2)\}$.
\end{description}

Then, we may assert that the error polynomial is $e=X_1^{14}X_2^{4}+X_1^{2}X_2^{8}+X_1X_2^{9}$; so, we decide that the word sent was $e+f$.
\end{example}

\subsection{Examples}

We finish with some examples that showing exceptions in Theorem~\ref{inferencia principal}. In all examples bellow we write, for $l\in \S(t)$, the set $F_l=\{f_l^{(1)},\dots,f_l^{(d_l)}\}$ (recall that $F_{(0,0)}=\{1\}$ by our General Assumption~\ref{Situac general}). 

\begin{example}\label{Caso2b}

 We set $q=2$, $r_1=r_2=15$, so that $\L=\F_{2^4}$. 
For $\tau=(1,0)$ we have
\[\S(4)=\begin{pmatrix}
         1&1&0&1&1&0&1&1\\
         1&0&1&1\\
         1&0&1\\
         1&1\\
         0\\
         1\\
         1\\
         1\\
        \end{pmatrix}\]
and suppose that the missing value is $u_{(3,1)}$.

We arrive at $l=(3,1)$ with $F_{(3,1)}=\{x^2+xy+1,y^2+y+1\}$, $G_{(3,1)}=\{xy+x+1\}$, $k_1=(2,2)$ and $g^{(1)}[u^{k_1}]_{k_1}=1$. In this case, $f_{(3,1)}^{(2)}[U]_{(3,1)}=0$ by convention.

From here, we have the exception appeared in Theorem~\ref{inferencia principal}.\textit{2(b)} in the case of lexicographic ordering. In fact $f_{(3,1)}^{(1)}[U]_{(3,1)}=1$. 
\end{example}

\begin{example}\label{Casos 1c2c}
 We set $q=2$, $r_1=r_2=15$, so that $\L=\F_{2^4}$. 
 
We shall implement the BMSa over
  \[\S(3)=\left(\begin{array}{lllllllllllllll}
   1 & a^2 & a^{4} & a^{6} & a^{8}& a^{10}\\
   
   a^{12} & a^{2} & 1  \\
   
   a^{9} & a^{3}  \\
   
   a^{14} \\
   
   a^{3}\\
   0
  \end{array}\right)\]
under both of the orders considered. 

We consider two cases:
\begin{enumerate}
 \item The missing value is $u_{(1,2)}$.
 \item The missing value is $u_{(2,1)}$.
\end{enumerate}

We begin with the lexicographic ordering. 

We arrive at $l=(1,2)$ with $F_{(1,2)}=\{X_1^2+a^{7}X_1+a^{10}X_2+a^{5},  X_1X_2+a^{3}X_1+a^2X_2+a^{5},                    X_2^2+a^{6}X_2+a^5\}$, $G_{(1,2)}=\{X_2+a^2,X_1+a^{12}\}$,
$k_1=k_2=(1,1)$ and $g^{(i)}[u^{k_i}]_{k_i}=a^{14}$, for $i=1,2$.

Then we have
\begin{enumerate}
 \item $f_{(1,2)}^{(1)}[U]_{(1,2)}=0$ by convention, $f_{(1,2)}^{(2)}[U]_{(1,2)}=a^4$ and $f_{(1,2)}^{(3)}[U]_{(1,2)}=a$. From here, we have the exception appeared in Theorem~\ref{inferencia principal}.\textit{1(c)} in the case of lexicographic ordering.
 \item $f_{(2,1)}^{(1)}[U]_{(2,1)}=1$, $f_{(2,1)}^{(2)}[U]_{(2,1)}=1$ and $f_{(1,2)}^{(3)}[U]_{(1,2)}=0$ by convention. From here, we have the exception appeared in Theorem~\ref{inferencia principal}.\textit{2(c)} in the case of lexicographic ordering.
\end{enumerate}

Now, we implement the BMSa under the inverse graded ordering and we arrive at $l=(2,1)$ with $F_{(2,1)}=\{X_1^2+a^{7}X_1+a^{10}X_2+a^{5},\; X_1X_2+a^{3}X_1+a^2X_2+a^{5},\; X_2^2+a^{6}X_2+a^5\}$, $G_{(1
2,1)}=\{X_2+a^2,\; X_1+a^{12}\}$,
$k_1=k_2=(1,1)$ and $g^{(i)}[u^{k_i}]_{k_i}=a^{13}$, for $i=1,2$.

Then we have

\begin{enumerate}
 \item $f_{(2,1)}^{(1)}[U]_{(2,1)}=1$ and $f_{(2,1)}^{(2)}[U]_{(2,1)}=a^5$. From here, we have the exception appeared in Theorem~\ref{inferencia principal}.\textit{2(c)} in the case of graded ordering.
 \item $f_{(1,2)}^{(2)}[U]_{(1,2)}=a$ and $f_{(1,2)}^{(3)}[U]_{(1,2)}=a$. From here, we have the exception appeared in Theorem~\ref{inferencia principal}.\textit{1(c)} in the case of graded ordering.
\end{enumerate}
\end{example}

%
%
%
%
%
%
%
%
%

\end{document}